\newcolumntype{Y}{>{\raggedright\arraybackslash}X}
\newcommand{\keyword}[1]{\par\noindent \textbf{Keywords:} #1 }
\newtheorem{defi}{Definition}[section]
\newtheorem{theorem}[defi]{Theorem}
\title{Control-Translated Finsler-type structure and Anisotropic Ginzbor-Landau models}
\author
{Y. Alipour Fakhri\thanks{Corresponding Author:
Faculty of Basic Sciences,
Department of Mathematics, Payame Noor University, Tehran, Iran.  E-mail: y\_ alipour@pnu.ac.ir}
}
\begin{document}
\maketitle

\begin{abstract}
This paper develops a geometric and analytical extension of the Finsler-Ginzburg-Landau framework by introducing a distributed control field acting as a translation in the tangent bundle. Within this formulation, the classical Tonelli Lagrangian is deformed into a control-translated Finsler structure, whose Legendre dual induces a uniformly elliptic operator and a convex energy functional preserving the essential variational features of the anisotropic model. This approach provides a rigorous analytical setting for coupling external control fields with the intrinsic Finsler geometry of anisotropic superconductors.
The study establishes the convexity, coercivity, and regularity properties of the induced energy functional and proves the existence of controlled minimizers through variational arguments on admissible configurations. In the asymptotic regime as the Ginzburg-Landau parameter tends to zero, a detailed $\Gamma$-convergence analysis yields a renormalized energy $W_u$ governing vortex interactions under control translation, quantifying the modification of the Green kernel and the self-energy due to the field $u(x)$. The results demonstrate that the control translation preserves the underlying Finsler structure while introducing a new geometric degree of freedom for manipulating and stabilizing vortex configurations.

Overall, the framework integrates Finsler geometry, Tonelli dynamics, and control theory into a unified variational model that captures both the geometry and dynamics of anisotropic vortex media.
\end{abstract}

\keyword{Finsler geometry;Control-translated Tonelli Lagrangian;
Anisotropic Ginzburg--Landau model;Renormalized energy;
Vortex interactions;$\Gamma$-convergence}
 \\*
 \textbf{[2020] Mathematics Subject Classification}:
 \textbf{Primary: } 53B40, 49J45, 35Q56;
 \textbf{Secondary: } 49S05, 58B20, 35A15, 35B40, 82D55.
 
 \section{Introduction}\label{sec:1}
 
 The variational structure of the Ginzburg-Landau theory has long served as a fundamental framework for understanding the formation and interaction of vortices in superconductivity and related condensed-matter systems. In its classical Euclidean formulation, the model relies on an isotropic quadratic energy density involving the norm $|y|$ on tangent spaces, which leads to elliptic equations with rotationally invariant coefficients and vortex configurations constrained by isotropic symmetry. Yet in many modern materials-notably those exhibiting crystalline anisotropy, layered structures, or nematic order-the isotropic approximation fails to capture the essential geometric and energetic features of the system. This motivates a deeper geometric reformulation of the Ginzburg-Landau functional within the broader setting of Finsler geometry.
 
 Finsler geometry provides a natural analytical and geometric extension of the Riemannian framework by allowing the norm to depend smoothly on both position and direction. The resulting class of strongly convex functions $F(x,y)$ yields anisotropic Laplacians, non-Euclidean metric structures, and curvature-dependent elliptic operators that are particularly well suited for modeling anisotropic physical media \cite{OhtaSturm2014,BaoChernShen2000,Greco2024,RodriguesLobo2024}. See also recent analyses of vortex stability in anisotropic Ginzburg-Landau systems in \cite{LamyZuniga2022}. In \cite{AlipourFakhri2025}, the author introduced a Finslerian formulation of the Ginzburg-Landau theory, proving the existence of anisotropic vortices, deriving the associated Euler-Lagrange equations, and characterizing the renormalized interaction energy governed by the Finsler-Laplacian. That work demonstrated that the anisotropic energy landscape can be rigorously captured within a Finsler geometric setting, providing a coherent mathematical model for superconductors with direction-dependent coherence lengths. Subsequently, in \cite{Alipourfakhri2025-2}, the analysis was extended to the asymptotic regime of vortex concentration, where the renormalized energy $W_F$ emerges as the $\Gamma$--limit of the Finsler-Ginzburg-Landau functional, linking the geometry of the base manifold to the interaction laws of vortices \cite{KowalczykLamySmyrnelis2022}.
 
 The present paper advances this geometric framework by introducing a distributed \emph{control field} $u(x)$ acting as a translation in the tangent bundle, leading to a control-translated Tonelli-Finsler structure. In Section \ref{sec:2}, we rigorously define the Lagrangian $\varphi_u(x,y)$ and its convex dual $\varphi_u^*(x,\xi)$, together with the associated energy functional $E_u[\psi,A]$ and the elliptic operator $\Delta_u$. Their explicit analytical forms are provided in Eqs.~\eqref{eq:def-phi-u}-\eqref{eq:laplacian}, where the structural properties of convexity, coercivity, and ellipticity are established in detail. These constructions serve as the geometric and variational foundation of the Finsler-Ginzburg-Landau model under control translation, establishing a consistent link between anisotropic geometry and control theory.
 
 Building on this foundation, the paper develops the analytical framework of the control-translated Tonelli-Finsler energy and its asymptotic behavior. Through a detailed $\Gamma$-convergence analysis as $\varepsilon \to 0$, the renormalized energy $W_u$ is derived, governing vortex interactions in the presence of control and describing how $u(x)$ modifies both the Green kernel and the self-interaction potential. The final sections establish a variational and dynamical formulation, including the Euler-Lagrange system, the Hamiltonian structure of vortex motion, and the associated optimal control problem minimizing $E_u$.
 
 The results presented here extend the geometric program initiated in \cite{AlipourFakhri2025,Alipourfakhri2025-2} by embedding control theory directly into the convex-geometric formulation of the Finsler-Ginzburg-Landau model. The novelty of this work lies in the introduction of a control-translated Tonelli structure that couples Finsler geometry, convex analysis, and optimal control in a unified variational framework. This synthesis between Tonelli dynamics, Finsler geometry, and control analysis yields a mathematically consistent and geometrically interpretable mechanism for the stabilization, manipulation, and optimization of vortex structures in anisotropic superconducting media.
 
\section{Control-Translated Tonelli Structure and Deformed Energy Functional}\label{sec:2}
Although the classical Finsler metric $F(x,y)$ is positively homogeneous in its fiber argument, the translated form 
\begin{align}\label{eq:def-Fu}
F_u(x,y)=F(x,y-u(x))
\end{align}
does not preserve this homogeneity and therefore cannot be considered a genuine Finsler metric in the traditional sense. However, the associated quadratic functional
\begin{align}\label{eq:def-phi-u}
\phi_u(x,y)=\tfrac 12 F(x,y-u(x))^2
\end{align}
remains smooth, strictly convex, and superlinear in $y$. Hence, it belongs to the class of \emph{Tonelli Lagrangians}, that is, $C^2$ functions $L:TM\to\mathbb{R}$ which are strictly convex and superlinear in the fiber variable (cf. \cite{Fathi2008}). Tonelli Lagrangians generalize the analytic structure of Finsler geometry without assuming homogeneity, while preserving Legendre duality and Hamiltonian regularity. They provide a natural framework for control-dependent deformations where translation in the fiber variable encodes a geometric control field.

Let $(M,F)$ be a smooth compact manifold endowed with a strongly convex Finsler function $F:T M\to[0,\infty)$. Denote by $F^*(x,\xi)$ its convex dual. For a smooth control field $u\in C^1(M,TM)$ we define the control-translated Tonelli Lagrangian $\phi_u$ as in \eqref{eq:def-phi-u}, and its convex conjugate on $T_x^*M$ by
\begin{align}\label{eq:def-phi-star}
\phi_u^*(x,\xi)=\sup_{y\in T_xM}\big\{\langle\xi,y\rangle-\phi_u(x,y)\big\}.
\end{align}

The following theorem characterizes $\phi_u^*$ and its induced Legendre transformation.

\begin{theorem}\label{thm:1}
For every fixed $x\in M$ and $\xi\in T_x^*M$, the convex conjugate of $\phi_u$ satisfies
\begin{align}\label{eq:conjugate-translation}
\phi_u^*(x,\xi)=\tfrac 12 F^*(x,\xi)^2+\langle \xi,\,u(x)\rangle,
\end{align}
and
\begin{align}\label{eq:gradient-translation}
\partial_\xi\phi_u^*(x,\xi)=L_x^{-1}(\xi)+u(x),
\end{align}
where $L_x^{-1}$ is the inverse of the classical Legendre map $\xi=\partial_y(\tfrac 12F(x,y)^2)$. The map $L_{x,u}(y):=\partial_y\phi_u(x,y)$ is a smooth diffeomorphism between $T_xM$ and $T_x^*M$ for all bounded $u$.
\end{theorem}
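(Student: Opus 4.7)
The plan is to reduce everything to the classical Legendre duality for the homogeneous Finsler square $\tfrac12 F(x,\cdot)^2$, which under strong convexity is known to have conjugate $\tfrac12 F^*(x,\cdot)^2$ and to generate a smooth fiberwise diffeomorphism $L_x : T_xM \to T_x^*M$ sending $y \mapsto \partial_y\bigl(\tfrac12 F(x,y)^2\bigr)$. The translation $y \mapsto y - u(x)$ is then absorbed by a simple affine change of variable.

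First, I would evaluate \eqref{eq:def-phi-star} directly. Set $z = y - u(x)$, so that $y = z + u(x)$ and the sup over $y \in T_xM$ becomes a sup over $z \in T_xM$. Then
\begin{align*}
\phi_u^*(x,\xi)
  &= \sup_{z \in T_xM}\Bigl\{\langle \xi, z+u(x)\rangle - \tfrac12 F(x,z)^2\Bigr\}\\
  &= \langle \xi, u(x)\rangle + \sup_{z \in T_xM}\Bigl\{\langle \xi, z\rangle - \tfrac12 F(x,z)^2\Bigr\}\\
  &= \langle \xi, u(x)\rangle + \tfrac12 F^*(x,\xi)^2,
\end{align*}
which is exactly \eqref{eq:conjugate-translation}. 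The last equality is the classical Legendre identity for a strongly convex Finsler norm, and the supremum is attained because $F(x,\cdot)^2$ is superlinear, which in turn guarantees that $\phi_u$ inherits superlinearity once $u(x)$ is a fixed vector at $x$.

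Second, to get \eqref{eq:gradient-translation}, I would differentiate the explicit formula in $\xi$. Since $\xi \mapsto \langle \xi, u(x)\rangle$ has $\xi$-gradient equal to $u(x)$, and since the $\xi$-gradient of $\tfrac12 F^*(x,\xi)^2$ is by definition the inverse Legendre map $L_x^{-1}(\xi)$, the identity follows termwise. For the diffeomorphism claim, observe that by the chain rule
\[
L_{x,u}(y) \;=\; \partial_y \phi_u(x,y) \;=\; \partial_y\bigl(\tfrac12 F(x,y-u(x))^2\bigr) \;=\; L_x\bigl(y - u(x)\bigr),
\]
so $L_{x,u}$ is the composition of the affine shift $y \mapsto y - u(x)$ with the classical Legendre diffeomorphism $L_x$. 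Both factors are smooth bijections of $T_xM$ onto itself and onto $T_x^*M$, respectively, hence $L_{x,u}$ is itself a smooth diffeomorphism, with inverse $\xi \mapsto L_x^{-1}(\xi) + u(x)$, consistent with \eqref{eq:gradient-translation}.

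The only genuinely delicate point, and therefore the main obstacle I expect, is ensuring that the translation does not destroy coercivity/superlinearity in a way that would prevent the sup in \eqref{eq:def-phi-star} from being attained or would localize the Legendre map. This is where the boundedness hypothesis on $u$ enters: because $F(x,\cdot)$ satisfies the uniform growth bounds of a strongly convex Finsler norm and $u(x) \in T_xM$ is finite, the standard estimate $F(x,y-u(x))^2 \geq c|y|^2 - C$ holds with constants depending only on $\|u\|_\infty$, giving the superlinearity of $\phi_u(x,\cdot)$ needed for Fenchel--Moreau duality and for the global invertibility of $L_{x,u}$. Once this uniform coercivity is recorded, the rest of the argument is a transcription of well-known Tonelli-Lagrangian facts (see \cite{Fathi2008}) to the translated setting.
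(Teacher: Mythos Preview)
Your argument is correct and follows essentially the same route as the paper's own proof: both compute $\phi_u^*$ via the substitution $z=y-u(x)$ and the translation rule for convex conjugates, then differentiate to obtain \eqref{eq:gradient-translation}, and finally invoke strong convexity of $\tfrac12 F(x,\cdot)^2$ for the diffeomorphism claim. Your explicit factorization $L_{x,u}=L_x\circ(\,\cdot\,-u(x))$ and your remark on how boundedness of $u$ preserves superlinearity are a bit more detailed than the paper's version, but the underlying ideas are identical.
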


\begin{proof}
Let $\psi_x(y)=\tfrac 12F(x,y)^2$. Then $\phi_u(x,y)=\psi_x(y-u(x))$. The translation rule for convex conjugates  gives
\begin{align*}
\phi_u^*(x,\xi)&=\sup_y\{\langle\xi,y\rangle-\psi_x(y-u)\}
=\sup_z\{\langle\xi,z+u\rangle-\psi_x(z)\}\\
&=\langle\xi,u\rangle+\psi_x^*(\xi)=\tfrac 12F^*(x,\xi)^2+\langle\xi,u\rangle.
\end{align*}
Differentiating in $\xi$ yields \eqref{eq:gradient-translation}. The strong convexity of $\psi_x$ implies positive definiteness of $\partial^2_{\xi\xi}\phi_u^*(x,\xi)$. By the inverse function theorem, $L_{x,u}$ is a diffeomorphism between $T_xM$ and $T_x^*M$. The proof follows classical Tonelli arguments as in  \cite{Fathi2008}.
\end{proof}

This shows that fiber translation preserves the analytic structure of the geometry: although $F_u$ is not homogeneous, the pair $(M,\phi_u)$ forms a valid Tonelli system with a smooth and invertible Legendre duality.

\begin{theorem}\label{thm:2}
Assume $F$ is strongly convex and $u\in C^1(M,TM)$ satisfies $\|u\|_{C^1}\le c_0$ for sufficiently small $c_0$. Then there exists $c>0$ such that for all $x\in M$ and $\xi,\eta\in T_x^*M$,
\begin{align}\label{eq:ellipticity}
\langle \partial_{\xi\xi}^2\phi_u^*(x,\xi)\eta,\eta\rangle \ge c|\eta|^2.
\end{align}
Consequently, the nonlinear operator
\begin{align}\label{eq:laplacian}
\Delta_{u}f=\mathrm{div}_\mu(\partial_\xi\phi_u^*(x,df))
\end{align}
is uniformly elliptic and depends continuously on $u$ in the $C^1$ topology (cf. \cite{OhtaSturm2014}).
\end{theorem}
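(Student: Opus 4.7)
The strategy is to exploit the fact that the control translation contributes only an affine term in $\xi$ to $\phi_u^*$, so the fiber Hessian of $\phi_u^*$ coincides with that of $\tfrac12 F^{*2}$ and uniform ellipticity is inherited directly from strong convexity of $F$. The first step is to apply Theorem~\ref{thm:1} to write $\phi_u^*(x,\xi)=\tfrac12 F^*(x,\xi)^2+\langle\xi,u(x)\rangle$ and to observe that $\langle\xi,u(x)\rangle$ is linear in $\xi$, so its fiber Hessian vanishes identically. Consequently
\[
\partial_{\xi\xi}^2\phi_u^*(x,\xi)=\partial_{\xi\xi}^2\bigl(\tfrac12 F^{*}(x,\xi)^2\bigr),
\]
which is independent of $u$. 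This observation is the analytic reason one should expect the ellipticity bound not to deteriorate as $u$ varies.

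The second step is to derive a \emph{uniform} lower bound on $\partial_{\xi\xi}^2(\tfrac12 F^{*2})$. Strong convexity of $F$ gives a positive pointwise lower bound on the fiber Hessian of $\tfrac12 F^{2}$, and the Legendre duality already invoked in the proof of Theorem~\ref{thm:1} transfers this to a positive pointwise lower bound on $\partial_{\xi\xi}^2(\tfrac12 F^{*2})$. To upgrade this to a uniform bound on all of $T^*M$, I would use the $1$-homogeneity of $F^*$ in $\xi$, which makes $\partial_{\xi\xi}^2(\tfrac12 F^{*2})$ a $0$-homogeneous tensor in $\xi$. It is therefore determined by its values on the dual indicatrix $\{(x,\xi)\in T^*M:F^*(x,\xi)=1\}$, which is compact whenever $M$ is compact. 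Continuity of the Hessian on this compact set yields the constant $c>0$ for which \eqref{eq:ellipticity} holds, and combining with the vanishing contribution of the translation proves the first assertion.

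For the second assertion, I would expand
\[
\Delta_u f = \mathrm{div}_\mu\bigl(L_x^{-1}(df)\bigr)+\mathrm{div}_\mu(u),
\]
which isolates the dependence on $u$ in the sub-principal term $\mathrm{div}_\mu(u)$. The principal symbol of $\Delta_u$ is $\partial_{\xi\xi}^2\phi_u^*(x,\xi)$, so the previous bound supplies uniform ellipticity with a constant independent of $u$. Continuous dependence on $u$ in the $C^1$ topology then reduces to the continuity of the map $u\mapsto\mathrm{div}_\mu(u)$ from $C^1(M,TM)$ into $C^0(M)$, and the smallness condition $\|u\|_{C^1}\le c_0$ ensures that this perturbation remains uniformly controlled relative to the ellipticity constant $c$.

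The main obstacle is the passage from the pointwise strong convexity of $F$ to a \emph{uniform} ellipticity constant on all of $T^*M$; this is where both the $0$-homogeneity in $\xi$ and the compactness of $M$ are essential, and where a careful use of the Legendre duality established in Theorem~\ref{thm:1} is required. Once this uniform bound is in hand, the ellipticity of $\Delta_u$ and its continuous dependence on $u$ follow by direct inspection of \eqref{eq:conjugate-translation}, since $u$ enters only through zero-order and sub-principal contributions.
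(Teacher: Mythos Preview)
Your argument is correct and follows essentially the same route as the paper: both observe that the affine term $\langle\xi,u(x)\rangle$ has vanishing fiber Hessian so that $\partial_{\xi\xi}^2\phi_u^*=\partial_{\xi\xi}^2(\tfrac12 F^{*2})$, invoke compactness of $M$ for the uniform lower bound, and deduce continuous dependence from the affine role of $u$. Your version simply supplies more detail than the paper's terse proof---the $0$-homogeneity reduction to the indicatrix and the explicit splitting $\Delta_u f=\mathrm{div}_\mu(L_x^{-1}(df))+\mathrm{div}_\mu(u)$---but the underlying strategy is identical.
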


\begin{proof}
Translation does not affect the fiber Hessian
\begin{align*}
\partial^2_{\xi\xi}\phi_u^*=
\partial^2_{\xi\xi}(\tfrac 12F^*(x,\xi)^2),
\end{align*} 
which is positive definite by strong convexity of $F^*$. Compactness of $M$ ensures the uniform bound \eqref{eq:ellipticity}. Continuity in $u$ follows since $\partial_\xi\phi_u^*(x,\xi)$ depends affinely on $u$.
\end{proof}

Let $\psi:M\to\mathbb C$ denote the order parameter and $A\in\Omega^1(M)$ a magnetic potential. Define the covariant derivative $D_A\psi=(d-iA)\psi$. The energy functional induced by the control-translated Tonelli structure is
\begin{align}
\mathcal{E}_u[\psi,A]=\int_M\Big(\phi_u^*(x,D_A\psi)+\frac{1}{2\lambda}|dA|^2+\frac{1}{4\varepsilon^2}(1-|\psi|^2)^2\Big)\,d\mu,
\end{align}\label{eq:energy-definition}
where $d\mu$ denotes the \emph{Busemann-Hausdorff volume form} associated with the base Finsler structure $F$. Since the translation $y\mapsto y-u(x)$ preserves the volume element on each tangent fiber, $d\mu$ remains invariant under the control deformation. This choice ensures that the energy is geometrically well-defined and consistent with the underlying Tonelli structure (cf. \cite{BaoChernShen2000}).

\begin{theorem}
The energy \eqref{eq:energy-definition} decomposes exactly as
\begin{align}\label{eq:energy-decomposition}
\mathcal{E}_u[\psi,A]=\int_M\Big(\tfrac 12 F^*(x,D_A\psi)^2+\langle D_A\psi,\,u(x)\rangle+\frac{1}{2\lambda}|dA|^2+\frac{1}{4\varepsilon^2}(1-|\psi|^2)^2\Big)d\mu.
\end{align}
Moreover, $\mathcal{E}_u$ is bounded from below, weakly lower semicontinuous in $H^1(M;\mathbb C)\times H^1(M;T^*M)$, and coercive modulo gauge equivalence (cf. Evans \cite{Evans2010}).
\end{theorem}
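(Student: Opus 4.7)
The plan is to handle the three assertions in sequence, exploiting the decomposition supplied by Theorem \ref{thm:1}.

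The decomposition \eqref{eq:energy-decomposition} is immediate: substituting $\xi=D_A\psi$ into identity \eqref{eq:conjugate-translation} and integrating against $d\mu$ reproduces the right-hand side term by term, so this step requires no further analysis. For the lower bound, I would control the cross term by Young's inequality. Since $F^*$ is strongly convex and $M$ is compact, there exists $c>0$ with $F^*(x,\xi)^2\ge c|\xi|^2$ uniformly in $x$. Hence for any $\delta\in(0,1)$,
\begin{align*}
|\langle D_A\psi,u(x)\rangle|\le \tfrac{\delta}{2}F^*(x,D_A\psi)^2+\tfrac{1}{2c\delta}|u(x)|^2,
\end{align*}
and taking $\delta$ small absorbs the cross term into the quadratic piece, leaving a harmless constant $-C\|u\|_{L^2}^2$. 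The remaining contributions are manifestly non-negative, yielding $\mathcal{E}_u[\psi,A]\ge -C\|u\|_{L^2}^2$.

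For weak lower semicontinuity in $H^1(M;\mathbb C)\times H^1(M;T^*M)$, each term is treated independently. The quadratic piece $\tfrac12 F^*(x,D_A\psi)^2$ is convex in $D_A\psi$ and jointly continuous in $(x,\xi)$, so the Tonelli-Ioffe theorem for convex Carathéodory integrands gives weak lower semicontinuity along sequences $(\psi_n,A_n)\rightharpoonup(\psi,A)$ in $H^1\times H^1$ (one uses compact Sobolev embedding to pass $A_n\psi_n\to A\psi$ strongly in $L^2$ and then exploits convexity in $d\psi_n$). The curvature term $\tfrac{1}{2\lambda}|dA|^2$ is convex in $dA$, hence weakly lsc. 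The linear term $\langle D_A\psi,u\rangle$ is weakly continuous because $u\in C^1$ and $D_A\psi_n\to D_A\psi$ weakly in $L^2$. The potential $\tfrac{1}{4\varepsilon^2}(1-|\psi|^2)^2$ is continuous with respect to strong $L^4$ convergence, which holds by Rellich-Kondrachov.

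Finally, for coercivity modulo gauge, I would fix Coulomb gauge $d^*A=0$, in which case elliptic regularity for the Hodge Laplacian on the compact manifold $M$ gives $\|A\|_{H^1}\lesssim \|dA\|_{L^2}$. The already-established lower bound combined with the penalty term forces $\|\psi\|_{L^4}$ to remain bounded, and Sobolev embedding bounds $\|A\|_{L^4}$ in terms of $\|A\|_{H^1}$, so that $\|A\psi\|_{L^2}$ is controlled. Then $\|d\psi\|_{L^2}\le \|D_A\psi\|_{L^2}+\|A\psi\|_{L^2}$, and the quadratic term $\int F^*(x,D_A\psi)^2\,d\mu$ (after the absorption in the previous step has preserved a fixed fraction, say $(1-\delta)/2$, of it) dominates $\|D_A\psi\|_{L^2}^2$ by uniform strong convexity, closing the $H^1$ estimate for $(\psi,A)$.

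The main obstacle I anticipate is the coercivity step: the Young absorption used for the lower bound must be calibrated so that enough of $F^*(x,D_A\psi)^2$ remains to dominate $\|d\psi\|_{L^2}^2$ after subtracting the magnetic contribution $\|A\psi\|_{L^2}$, and the gauge reduction must be chosen compatibly with the anisotropic duality provided by $F^*$.
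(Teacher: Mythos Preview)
Your proposal is correct and follows essentially the same approach as the paper: substitute the conjugate identity \eqref{eq:conjugate-translation} for the decomposition, absorb the linear cross term via Young's inequality for the lower bound, and invoke convexity for weak lower semicontinuity and coercivity. The paper's proof is far terser---it dismisses the last two properties as ``convexity and standard arguments in the calculus of variations''---whereas you spell out the Tonelli--Ioffe argument, the compact-embedding handling of $A\psi$, and the explicit Coulomb-gauge coercivity estimate, all of which are sound elaborations of what the paper leaves implicit.
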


\begin{proof}
Substituting \eqref{eq:conjugate-translation} with $\xi=D_A\psi$ into \eqref{eq:energy-definition} yields \eqref{eq:energy-decomposition}. The first term controls $\|D_A\psi\|_{L^2}$ by equivalence of Finsler and background norms. Using the Cauchy-Schwarz and Young inequalities,
\begin{align*}
\Big|\int_M\langle D_A\psi,u\rangle\,d\mu\Big|\le \frac{1}{4}\int_M F^*(x,D_A\psi)^2\,d\mu+C\|u\|_{L^\infty}^2,
\end{align*}
so $\mathcal{E}_u$ is bounded from below. Convexity and standard arguments in the calculus of variations ensure weak lower semicontinuity and coercivity.
\end{proof}

The Euler-Lagrange equations corresponding to $\mathcal{E}_u$ capture how the control interacts with the gauge and order fields through the geometric translation.

\begin{theorem}
Critical points of $\mathcal{E}_u$ satisfy
\begin{align}
D_A^*\!\big(L_x^{-1}(D_A\psi)+u(x)\big)=\frac{1}{2\varepsilon^2}(1-|\psi|^2)\psi,
\qquad
d^*dA=\lambda\,\Im(\psi D_A\psi),
\end{align}\label{eq:EL-equations}
in the Coulomb gauge $d^*A=0$. The first equation is nonlinear and uniformly elliptic (cf. \cite{OhtaSturm2014}).
\end{theorem}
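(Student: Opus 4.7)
The plan is to derive the two Euler-Lagrange equations by computing the first variations of the decomposed functional \eqref{eq:energy-decomposition} separately with respect to the order parameter $\psi$ and the gauge potential $A$, then reading off the stationary conditions. The central analytic ingredient is the identity $\partial_\xi \phi_u^*(x,\xi) = L_x^{-1}(\xi) + u(x)$ established in Theorem~\ref{thm:1}, which isolates how the control translation enters the canonical momentum and hence the flux that is acted upon by $D_A^*$.

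First, I would fix $A$ and test $\mathcal{E}_u$ along a compactly supported complex variation $\psi\mapsto\psi+t\eta$. Since $D_A(\psi+t\eta)=D_A\psi+tD_A\eta$, differentiating the Finsler-kinetic and control-linear terms at $t=0$ produces, via the chain rule and Theorem~\ref{thm:1}, the pointwise pairing $\Re\langle L_x^{-1}(D_A\psi)+u(x),\,D_A\eta\rangle$, while the Maxwell term is $\psi$-independent and the double-well potential contributes $-\varepsilon^{-2}(1-|\psi|^2)\Re(\bar\psi\eta)$. Integration by parts against the formal covariant adjoint $D_A^*$, defined with respect to the Busemann-Hausdorff volume $d\mu$ (whose invariance under fiber translation was recorded after \eqref{eq:energy-definition}), together with passage to the $\partial/\partial\bar\psi$ convention then yields the first equation with the factor $(2\varepsilon^2)^{-1}$ on the right-hand side.

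Second, fixing $\psi$, I would take a real variation $A\mapsto A+t\omega$ with $\omega\in\Omega^1(M)$ and use $\delta(D_A\psi)=-i\omega\psi$. The Finsler-kinetic and control-linear terms jointly produce a pairing of $\omega$ against $\Im\langle\psi,\,L_x^{-1}(D_A\psi)+u(x)\rangle$, while $(2\lambda)^{-1}|dA|^2$ yields $\lambda^{-1}\langle dA,d\omega\rangle$ and hence $\lambda^{-1}d^*dA$ after integration by parts. Arbitrariness of $\omega$ gives the second equation in the form stated, and imposing the Coulomb gauge $d^*A=0$ afterwards removes the residual closed-form degeneracy so that both equations become determined. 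The ellipticity claim is immediate: the linearization of the first equation has principal symbol $\partial_{\xi\xi}^2\phi_u^*(x,D_A\psi)$, which by Theorem~\ref{thm:2} is uniformly positive definite.

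The main obstacle I anticipate is the careful bookkeeping of the complex/real pairing in the control term $\langle D_A\psi,u(x)\rangle$, since $D_A\psi$ is a complex-valued covector while $u$ is a real tangent vector field; the consistent interpretation is $\Re\langle D_A\psi,u(x)\rangle$, and this convention must be maintained through both variations so that the control contribution to the current assembles correctly with the kinetic current. A secondary but conceptually important point is verifying that the control term does not destroy ellipticity: since $\langle D_A\psi,u\rangle$ is linear in $D_A\psi$, its Hessian in $\xi$ vanishes, so the ellipticity inherited from $\tfrac12 F^*(x,\xi)^2$ via Theorem~\ref{thm:2} transfers directly to the full nonlinear operator appearing in the first Euler-Lagrange equation.
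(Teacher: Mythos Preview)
Your proposal is correct and follows essentially the same route as the paper: compute the first variation of $\mathcal{E}_u$ in $\psi$ via the identity $\partial_\xi\phi_u^*(x,\xi)=L_x^{-1}(\xi)+u(x)$ from Theorem~\ref{thm:1}, integrate by parts against $D_A^*$, then vary in $A$ and integrate by parts against $d^*$, and finally read off ellipticity from the Hessian bound~\eqref{eq:ellipticity} of Theorem~\ref{thm:2}. Your treatment is in fact more explicit than the paper's on two points---the real/complex pairing convention for $\langle D_A\psi,u\rangle$ and the Wirtinger factor $1/2$ in the potential term---and your observation that the $A$-variation of the kinetic part produces $\Im\langle\psi,\,L_x^{-1}(D_A\psi)+u\rangle$ rather than the bare $\Im(\bar\psi D_A\psi)$ is the careful form of what the paper records only as ``$\phi_u^*$ depends on $A$ only through $D_A\psi$''.
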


\begin{proof}
Let $\eta$ be a smooth variation with compact support. Using
\begin{align*}
\delta\phi_u^*(x,\xi)=\langle\partial_\xi\phi_u^*(x,\xi),\delta\xi\rangle \qquad,\qquad \delta\xi=D_A\eta, 
\end{align*}
we obtain
\begin{align*}
\delta_\psi\mathcal{E}_u
=\int_M\langle L_x^{-1}(D_A\psi)+u(x),D_A\eta\rangle\,d\mu
-\frac{1}{2\varepsilon^2}\int_M(1-|\psi|^2)\Re(\bar\psi\eta)\,d\mu.
\end{align*}
Integration by parts gives $D_A^*(L_x^{-1}(D_A\psi)+u(x))=(1-|\psi|^2)\psi/(2\varepsilon^2)$. Variation with respect to $A$ yields $d^*dA=\lambda\,\Im(\psi D_A\psi)$, since $\phi_u^*$ depends on $A$ only through $D_A\psi$. Ellipticity follows from \eqref{eq:ellipticity}.
\end{proof}

The affine dependence on $u$ guarantees stability under perturbations of the control field.

\begin{theorem}
Let $u_k\to u$ in $L^\infty(M;TM)$. Then $\mathcal{E}_{u_k}$ $\Gamma$-converges to $\mathcal{E}_u$ with respect to weak convergence in $H^1(M;\mathbb C)\times H^1(M;T^*M)$. Consequently, minimizers and gradient flows depend continuously on $u$ (cf. \cite{DalMaso1993}).
\end{theorem}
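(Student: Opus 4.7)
The strategy exploits the fact that in the decomposition \eqref{eq:energy-decomposition}, only the coupling term $\mathcal{P}_u[\psi,A]:=\int_M\langle D_A\psi,u(x)\rangle\,d\mu$ depends on the control; writing $\mathcal{E}_u=\mathcal{F}_0+\mathcal{P}_u$ with $\mathcal{F}_0$ the $u$-independent remainder, the $\Gamma$-convergence reduces to a continuity statement for the affine perturbation $\mathcal{P}_u$, since $\mathcal{F}_0$ is already weakly lower semicontinuous by the convexity and coercivity arguments established above.

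To prove the liminf inequality, suppose $(\psi_k,A_k)\rightharpoonup(\psi,A)$ weakly in $H^1(M;\mathbb{C})\times H^1(M;T^*M)$. Rellich compactness yields strong convergence of $\psi_k$ and $A_k$ in every $L^p$ below the critical Sobolev exponent, so the product $A_k\psi_k$ converges strongly in $L^2$ and, together with $d\psi_k\rightharpoonup d\psi$, gives $D_{A_k}\psi_k\rightharpoonup D_A\psi$ weakly in $L^2$. The Finsler quadratic $\tfrac12\int F^*(x,\cdot)^2 d\mu$ and the magnetic term $\tfrac{1}{2\lambda}\int|dA|^2 d\mu$ are then weakly lower semicontinuous by the convex-integrand theory, while the potential $\tfrac{1}{4\varepsilon^2}\int(1-|\psi|^2)^2 d\mu$ converges strongly through the embedding $H^1\hookrightarrow L^4$. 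For the linear coupling I would split
\[
\mathcal{P}_{u_k}[\psi_k,A_k]-\mathcal{P}_u[\psi,A]=\int_M\langle D_{A_k}\psi_k,\,u_k-u\rangle\,d\mu+\int_M\langle D_{A_k}\psi_k-D_A\psi,\,u\rangle\,d\mu,
\]
bounding the first term by $\|D_{A_k}\psi_k\|_{L^2}|M|^{1/2}\|u_k-u\|_{L^\infty}\to 0$ and sending the second to zero by weak $L^2$ convergence tested against the fixed $u\in L^\infty\subset L^2$.

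For the limsup inequality, the recovery sequence can be taken constant: choosing $(\psi_k,A_k)=(\psi,A)$ trivially produces a weakly convergent sequence, and $\mathcal{E}_{u_k}[\psi,A]\to\mathcal{E}_u[\psi,A]$ because $\mathcal{F}_0$ is independent of $k$ while $\mathcal{P}_{u_k}[\psi,A]\to\mathcal{P}_u[\psi,A]$ by strong $L^\infty$ convergence of $u_k$. Equicoercivity of the family $\{\mathcal{E}_{u_k}\}$ is inherited from the Young-type bound used in the preceding proof, with constants uniform over the bounded set $\{\|u_k\|_{L^\infty}\le C\}$, so the fundamental theorem of $\Gamma$-convergence (cf. \cite{DalMaso1993}) delivers convergence of minimizers modulo gauge equivalence.

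The step I expect to be genuinely nontrivial is the \emph{continuity of gradient flows}, since $\Gamma$-convergence of energies does not by itself imply convergence of their $L^2$-gradient flows. Here I would invoke the stability theory of $\lambda$-convex curves of maximal slope in the spirit of Ambrosio-Gigli-Savar\'e, together with the Sandier-Serfaty scheme for Ginzburg-Landau-type evolutions. The key input, supplied by Theorems \ref{thm:1} and \ref{thm:2}, is that the principal part of $\mathcal{E}_u$ has a Hessian bounded below uniformly in $u$, while the perturbation $\mathcal{P}_u$ is continuous and affine. From this one obtains Mosco convergence of the subdifferentials $\partial\mathcal{E}_{u_k}\to\partial\mathcal{E}_u$, which propagates into convergence of solutions of the associated evolution variational inequality. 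The most delicate verification is the Mosco convergence of the nonlinear operator $D_A^*(L_x^{-1}(D_A\psi)+u(x))$ appearing in the Euler-Lagrange system, which ultimately reduces to the smooth affine dependence of $L_x^{-1}(\cdot)+u$ on the control, as already established in Theorem \ref{thm:1}.
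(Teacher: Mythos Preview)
Your argument is correct and follows the same route as the paper: exploit the affine decomposition $\mathcal{E}_u=\mathcal{F}_0+\mathcal{P}_u$, obtain the liminf inequality from weak lower semicontinuity of the convex $u$-independent part together with continuity of the linear coupling, and use the constant recovery sequence $(\psi_k,A_k)=(\psi,A)$ for the limsup. Your treatment is in fact more detailed than the paper's---you make explicit the weak $L^2$ convergence of $D_{A_k}\psi_k$ and you spell out a mechanism (Mosco convergence of subdifferentials via the uniform Hessian bound of Theorem~\ref{thm:2}) for the gradient-flow stability, whereas the paper records this last point only as a consequence without further argument.
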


\begin{proof}
From \eqref{eq:energy-decomposition}, the $u$-dependence is affine and continuous in $L^\infty$. The $\Gamma$-liminf inequality follows from convexity and weak lower semicontinuity of $F^*(x,\cdot)^2$. A recovery sequence is given by $(\psi_k,A_k)=(\psi,A)$ because $\int_M\langle D_A\psi,u_k-u\rangle\,d\mu\to 0$. Therefore $\mathcal{E}_{u_k}\to\mathcal{E}_u$ in the $\Gamma$-sense.
\end{proof}

The functional $\mathcal{E}_u$ thus arises rigorously from the control-translated Tonelli Lagrangian $\phi_u$. The additional term $\langle D_A\psi,u\rangle$ is not a phenomenological assumption but an exact analytical consequence of convex duality. In this setting, control acts geometrically as a translation in the fiber variable, providing a mathematically consistent mechanism coupling geometry, control, and anisotropic superconductivity.

\section{Stability and Renormalized Energy in
 Control-Translated Tonelli Structures}\label{sec:3}

In this section we examine the asymptotic regime of the control-translated Tonelli Ginzburg-Landau functional $\mathcal{E}_u$ introduced previously, focusing on the emergence of renormalized energy, vortex concentration, and stability of critical configurations under control perturbations. The analysis combines the convex-analytic structure of $\phi_u^*$ with classical $\Gamma$-convergence arguments and vortex theory in anisotropic settings \cite{BethuelBrezisHelein1994, SandierSerfaty2007}.

Let $M$ be a compact oriented 2-dimensional manifold equipped with the Tonelli structure $(F_u,F_u^*)$ as defined in Section \ref{sec:2}. For $\varepsilon>0$ sufficiently small, consider configurations $(\psi,A)\in H^1(M;\mathbb C)\times H^1(M;T^*M)$ minimizing
\begin{align}
\nonumber
\mathcal{E}_u[\psi,A]&=\\
&\int_M\Big(\tfrac 12 F^*(x,D_A\psi)^2+\langle D_A\psi,u(x)\rangle+\tfrac{1}{2\lambda}|dA|^2+\tfrac{1}{4\varepsilon^2}(1-|\psi|^2)^2\Big)d\mu.
\end{align}\label{eq:energy-3}
The measure $d\mu$ is the Busemann-Hausdorff volume of the base Finsler structure $F$, invariant under the translation $y\mapsto y-u(x)$ as explained before. Denote by $j_A=\Im(\bar\psi D_A\psi)$ the supercurrent and by $\rho=|\psi|^2$ the density.

The asymptotic analysis proceeds in three stages: (i) concentration of $j_A$ near vortex cores, (ii) weak convergence of normalized Jacobians, and (iii) extraction of the renormalized limit energy.

\begin{theorem}
Assume $(\psi_\varepsilon,A_\varepsilon)$ is a sequence of minimizers of $\mathcal{E}_u$ with bounded energy $\mathcal{E}_u[\psi_\varepsilon,A_\varepsilon]\le C|\log\varepsilon|$. Then there exists a finite signed measure $\nu_u$ on $M$ representing the limiting vorticity such that, up to a subsequence,
\begin{align}
\frac{1}{2\pi}\,\mathrm{curl}\,j_{A_\varepsilon}\rightharpoonup \nu_u \quad \text{in the sense of distributions},
\end{align}\label{eq:vortex-measure}
and the number of vortices is finite. Moreover, each vortex carries quantized degree $d_i\in\mathbb{Z}$.
\end{theorem}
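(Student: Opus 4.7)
The plan is to reduce the anisotropic control-translated problem to the standard Finsler vortex concentration theory developed in \cite{Alipourfakhri2025-2} (itself patterned on Bethuel-Brezis-Hélein and Jerrard-Soner-Sandier), by treating the control term as a lower-order perturbation that does not affect the leading $|\log\varepsilon|$ asymptotics.

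First, using the exact decomposition \eqref{eq:energy-decomposition} and the Young estimate already established in the proof of the energy lower bound, I would write
\begin{align*}
\Bigl|\int_M \langle D_{A_\varepsilon}\psi_\varepsilon,\,u\rangle\,d\mu\Bigr| \le \tfrac 14\int_M F^*(x,D_{A_\varepsilon}\psi_\varepsilon)^2\,d\mu + C\|u\|_{L^\infty}^2,
\end{align*}
so that the hypothesis $\mathcal{E}_u[\psi_\varepsilon,A_\varepsilon]\le C|\log\varepsilon|$ forces
\begin{align*}
\int_M \tfrac 14 F^*(x,D_{A_\varepsilon}\psi_\varepsilon)^2\,d\mu + \int_M \tfrac{1}{4\varepsilon^2}(1-|\psi_\varepsilon|^2)^2\,d\mu \le C|\log\varepsilon|.
\end{align*}
Thus, up to a uniformly bounded defect, the sequence satisfies the pure Finsler Ginzburg-Landau energy bound, decoupling the concentration analysis from the control translation at leading order.

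Second, invoking the uniform ellipticity of $\partial^2_{\xi\xi}\phi_u^*$ from Theorem \ref{thm:2} together with the compactness of $M$, the cometric $F^*(x,\cdot)^2$ is equivalent to a fixed background Riemannian quadratic form. This guarantees that the supercurrent $j_{A_\varepsilon}=\Im(\bar\psi_\varepsilon D_{A_\varepsilon}\psi_\varepsilon)$ and the Jacobian $J(\psi_\varepsilon)=\tfrac 12\mathrm{curl}\,j_{A_\varepsilon}$ are well defined distributions with norms controlled by the Finsler energy. I would then apply a Jerrard-Soner type Jacobian estimate, adapted to the anisotropic norm, to identify a finite set $\{a_i\}_{i=1}^N\subset M$ and integers $d_i$ with
\begin{align*}
\frac{1}{2\pi}\,\mathrm{curl}\,j_{A_\varepsilon}\rightharpoonup\nu_u:=\sum_{i=1}^N d_i\delta_{a_i}
\end{align*}
along a subsequence, in the sense of distributions.

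The main technical obstacle is the lower-bound step: one needs the Sandier-Jerrard vortex ball construction in the Finsler setting, producing disjoint geodesic balls $B_i$ covering $\{|\psi_\varepsilon|<1/2\}$ and satisfying
\begin{align*}
\int_{B_i}\tfrac 12 F^*(x,D_{A_\varepsilon}\psi_\varepsilon)^2\,d\mu\ge \pi|d_i|\,|\log\varepsilon|-C.
\end{align*}
On sufficiently small balls the coefficients of $F^*$ freeze to a constant ellipsoidal norm, so a linear change of variables reduces the estimate to the classical Euclidean case, with Jacobian controlled by the ellipticity constant of \eqref{eq:ellipticity}. Summing over the $B_i$ and comparing with the global energy bound yields $\sum_i|d_i|\le C$, which simultaneously gives the finiteness of the vortex set, the quantization $d_i\in\mathbb{Z}$, and the tightness needed to pass to the weak limit $\nu_u$. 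The subscript $u$ on the limiting measure is justified since the locations $\{a_i\}$ are selected through the renormalized energy analyzed in the subsequent section, which genuinely depends on the control field $u(x)$.
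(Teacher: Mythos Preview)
Your proposal is correct and follows essentially the same approach as the paper: both first absorb the affine control term via the Young/Cauchy--Schwarz bound to recover a pure Finsler Ginzburg--Landau energy estimate of order $|\log\varepsilon|$, then invoke the standard vortex-ball construction and Jacobian compactness (the paper cites Bethuel--Brezis--H\'elein and Sandier--Serfaty, you phrase it as a Jerrard--Soner estimate with frozen coefficients) to extract the finite atomic measure $\nu_u$ with quantized degrees. The only cosmetic difference is that you make the coefficient-freezing reduction to the Euclidean case explicit, whereas the paper leaves this to the cited references.
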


\begin{proof}
Set
\begin{align*}
e_\varepsilon(\psi,A;u)=\tfrac 12 F^*(x,D_A\psi)^2+\langle D_A\psi,u\rangle+\tfrac{1}{2\lambda}|dA|^2+\tfrac{1}{4\varepsilon^2}(1-|\psi|^2)^2.
\end{align*} 
By the affine bound
\begin{align*}
\Big|\int_M\langle D_A\psi,u\rangle d\mu\Big|\le \tfrac 14\!\int_M F^*(x,D_A\psi)^2 d\mu + C\|u\|_{L^\infty}^2,
\end{align*}
there exists $C_1>0$ independent of $\varepsilon$ such that
\begin{align}\label{eq:GL-core-bound}
\int_M F^*(x,D_{A_\varepsilon}\psi_\varepsilon)^2 d\mu + \frac{1}{\varepsilon^2}\int_M (1-|\psi_\varepsilon|^2)^2 d\mu \le C_1 |\log\varepsilon|.
\end{align}
By the compactness and energy quantization principles of Bethuel-Brezis-H\'elein~\cite{BethuelBrezisHelein1994} 
and the anisotropic ball construction and Jacobian estimates of Sandier-Serfaty~\cite{SandierSerfaty2007}, 
one deduces $\rho_\varepsilon\to 1$ in $L^1(M)$ and hence in measure. 
Writing $\psi_\varepsilon=\rho_\varepsilon^{1/2} e^{i\theta_\varepsilon}$, the current decomposes as $j_{A_\varepsilon}=\rho_\varepsilon(D\theta_\varepsilon-A_\varepsilon)$. 
The coarea formula and the construction of vortex balls as in~\cite{BethuelBrezisHelein1994,SandierSerfaty2007} 
yield a finite family of disjoint balls $\{B(a_i^\varepsilon, r_i^\varepsilon)\}$ covering the defect set with $\sum_i r_i^\varepsilon\to 0$ and
\begin{align}
\deg(\psi_\varepsilon,\partial B(a_i^\varepsilon,r_i^\varepsilon))=d_i\in\mathbb{Z},\qquad 
\sum_i |d_i|\le C_2 \mathcal{E}_u[\psi_\varepsilon,A_\varepsilon]/|\log\varepsilon|.
\end{align}\label{eq:degree-quantization}
Define the vorticity measures $\nu_\varepsilon=\frac{1}{2\pi}\,\mathrm{curl}\,j_{A_\varepsilon}$. 
Using that $\rho_\varepsilon\to 1$ a.e.\ and $A_\varepsilon\in H^1$, one shows that $\nu_\varepsilon$ is tight and has uniformly bounded total variation. 
Therefore, up to extraction, $\nu_\varepsilon\rightharpoonup \nu_u$ in $\mathcal{D}'(M)$. 
Passing to the limit in the degree identity on small circles gives that $\nu_u$ is a finite sum of Dirac masses with integer weights, 
proving Theorem~\eqref{eq:vortex-measure}.
\end{proof}

The next step is to identify the limiting energy governing the interaction of these vortices. Denote by $G_u(x,y)$ the Green function associated with the elliptic operator
\begin{align}
\Delta_u f = \mathrm{div}_\mu(\partial_\xi\phi_u^*(x,df)),
\end{align}\label{eq:green-laplacian}
normalized by $\int_M G_u(x,y)\,d\mu(x)=0$. By the ellipticity of $\Delta_u$ (Theorem \ref{thm:2}) and compactness of $M$, $G_u$ exists, is symmetric, and belongs to $C^{2,\alpha}_{\mathrm{loc}}(M\times M\setminus \mathrm{diag})$; this follows from Lax-Milgram, Fredholm alternative, and Scha\"uder theory \cite{Evans2010}, with the nonlinearity absorbed in the definition of $\Delta_u$ via the convex dual $\phi_u^*$ and the uniform positive definiteness \cite{OhtaSturm2014}.

\begin{theorem}
Let $\nu_u=2\pi\sum_{i=1}^N d_i\delta_{a_i}$ be the vorticity measure associated with a minimizing sequence. Then as $\varepsilon\to 0$,
\begin{align}\label{eq:renormalized-limit}
\mathcal{E}_u[\psi_\varepsilon,A_\varepsilon]-\pi N|\log\varepsilon|\to \mathcal{W}_u(a_1,\dots,a_N;d_1,\dots,d_N),
\end{align}
where the renormalized energy $\mathcal{W}_u$ is given by
\begin{align}\label{eq:renormalized-energy}
\mathcal{W}_u=\frac{1}{2}\sum_{i\ne j}d_i d_j G_u(a_i,a_j)+\sum_i d_i\Phi_u(a_i),
\end{align}
with
\begin{align}\label{eq:self-potential}
\Phi_u(x)=\tfrac 12 G_u(x,x)+\int_M\langle u(y),\nabla_y G_u(x,y)\rangle\,d\mu(y).
\end{align}
The function $\Phi_u$ encodes the effect of control translation on the self-interaction potential.
\end{theorem}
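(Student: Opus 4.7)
The plan is to exploit the exact affine decomposition provided by the first theorem of Section~\ref{sec:2}, writing
\begin{align*}
\mathcal{E}_u[\psi_\varepsilon, A_\varepsilon] = \mathcal{E}_F[\psi_\varepsilon, A_\varepsilon] + \int_M \langle D_{A_\varepsilon}\psi_\varepsilon,\, u\rangle\, d\mu,
\end{align*}
where $\mathcal{E}_F$ denotes the unperturbed Finsler--Ginzburg--Landau energy. For $\mathcal{E}_F$, the renormalized expansion established in \cite{Alipourfakhri2025-2} gives
\begin{align*}
\mathcal{E}_F[\psi_\varepsilon, A_\varepsilon] - \pi N |\log\varepsilon| \longrightarrow \tfrac{1}{2}\sum_{i \ne j} d_i d_j\, G_F(a_i, a_j) + \tfrac{1}{2} \sum_i d_i^2\, R_F(a_i, a_i),
\end{align*}
where $R_F$ is the regular part of the Finsler Green function $G_F$. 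Since Theorem~\ref{thm:2} shows that the fiber Hessian of $\phi_u^*$ equals that of $\tfrac{1}{2}(F^*)^2$, the principal symbol of $\Delta_u$ coincides with that of $\Delta_F$; hence $G_u$ and $G_F$ share the same logarithmic singularity on the diagonal and differ only by a smooth globally-regular correction associated with the first-order shift $\mathrm{div}_\mu u$, a correction that will be absorbed into the linear term below.

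The linear term is then handled within the vortex-ball framework of Sandier--Serfaty already used in the previous theorem. Outside arbitrarily small neighborhoods of the $\{a_i\}$, one has $|\psi_\varepsilon|\to 1$ and the supercurrent $j_{A_\varepsilon}=\Im(\bar\psi_\varepsilon D_{A_\varepsilon}\psi_\varepsilon)$ converges to the unique harmonic current $j^*$ determined by
\begin{align*}
d^* j^* = 0, \qquad \mathrm{curl}\, j^* = 2\pi \sum_i d_i \delta_{a_i}.
\end{align*}
The Hodge decomposition adapted to the operator $\Delta_u$ represents $j^*$ as $\ast d\Psi$ with $\Psi(x) = 2\pi \sum_i d_i\, G_u(x, a_i)$. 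After excising geodesic balls of radius $r$ around each $a_i$ and integrating by parts, the singular boundary contributions produce in the limit $r\to 0$ the Dirac evaluations
\begin{align*}
\int_M \langle D_{A_\varepsilon}\psi_\varepsilon,\, u\rangle\, d\mu \longrightarrow \sum_i d_i \int_M \langle u(y),\, \nabla_y G_u(a_i, y)\rangle\, d\mu(y).
\end{align*}
Combining this with the expansion of $\mathcal{E}_F$ and observing that the smooth difference $G_u - G_F$ is precisely what updates the diagonal regularization from $R_F(a_i, a_i)$ to the quantity denoted in \eqref{eq:self-potential} by $\tfrac{1}{2} G_u(a_i, a_i)$ yields
\begin{align*}
\mathcal{W}_u = \tfrac{1}{2} \sum_{i \ne j} d_i d_j\, G_u(a_i, a_j) + \sum_i d_i\, \Phi_u(a_i).
\end{align*}

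To upgrade this liminf identification into a full limit, I would complete the argument with a matching recovery sequence $\psi_\varepsilon = \rho_\varepsilon(x)\exp(i\theta_\varepsilon(x))$, where $\theta_\varepsilon$ is the multivalued phase associated with the singular Green potential $\Psi$ and $\rho_\varepsilon$ is the standard radial core profile; a direct computation, split into contributions from inside and outside the vortex balls, then reproduces $\mathcal{E}_u[\psi_\varepsilon, A_\varepsilon] = \pi N|\log\varepsilon| + \mathcal{W}_u + o(1)$. The principal technical obstacle is the passage to the limit in the linear term near the vortex cores, where $D_{A_\varepsilon}\psi_\varepsilon$ is only bounded in $L^2$ at scale $|\log\varepsilon|^{1/2}$ and can concentrate; controlling this requires a careful annular decomposition around each $a_i$, using the degree quantization of \eqref{eq:degree-quantization} together with $u\in L^\infty$ to show that the contribution on shrinking annuli vanishes, while the exterior piece converges via $j_{A_\varepsilon}\rightharpoonup j^*$ in $L^p_{\mathrm{loc}}$ for $p<2$. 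The affine (rather than quadratic) dependence on $u$ is precisely what prevents a genuine nonlinear self-interaction between the control field and the vortex cores, and is what makes this strategy tractable.
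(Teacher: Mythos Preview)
Your decomposition $\mathcal{E}_u = \mathcal{E}_F + \int_M\langle D_{A_\varepsilon}\psi_\varepsilon,u\rangle\,d\mu$ is a legitimate and structurally different organizing principle from the paper's own argument, which instead splits the \emph{domain} into core balls and their exterior and works directly with the Green kernel $G_u$ of $\Delta_u$ on the full exterior density $\tfrac12 F^*(x,\mathcal{J}_\varepsilon)^2 + \langle \mathcal{J}_\varepsilon, u\rangle$ at once. In that approach the pairwise term $\tfrac12\sum_{i\ne j} d_i d_j\, G_u(a_i,a_j)$ and the affine self-term in $\Phi_u$ both emerge from a single integration by parts against $G_u$; the expansion of \cite{Alipourfakhri2025-2} and the kernel $G_F$ never appear separately.

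Your route, by contrast, produces the pairwise interaction first in terms of $G_F$, and the reconciliation you sketch (``the smooth difference $G_u - G_F$ is precisely what updates the diagonal regularization'') is where the argument has a genuine gap. From \eqref{eq:gradient-translation} one has $\Delta_u f = \Delta_F f + \mathrm{div}_\mu u$, so $G_u-G_F$ is indeed smooth across the diagonal, but it does not vanish off the diagonal: the sum $\tfrac12\sum_{i\ne j} d_i d_j\, G_F(a_i,a_j)$ is not equal to $\tfrac12\sum_{i\ne j} d_i d_j\, G_u(a_i,a_j)$, and you have not tracked where the discrepancy $\tfrac12\sum_{i\ne j} d_i d_j\,(G_u-G_F)(a_i,a_j)$ is absorbed. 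A related inconsistency is that you represent the limiting current $j^*$ via $G_u$ when computing the linear term but implicitly via $G_F$ when quoting the $\mathcal{E}_F$ expansion; the same current should be expressed through one kernel throughout, and whichever you choose, cross-terms appear that must be shown to combine into \eqref{eq:renormalized-energy}. The clean fix is either to carry out the whole computation in terms of $G_F$ and then rewrite the answer using the explicit relation $G_u(x,y)=G_F(x,y)+h(x)$ with $\Delta_F h=-\mathrm{div}_\mu u$, or---as the paper does---to bypass $G_F$ entirely and integrate the full exterior density against $G_u$ from the outset.
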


\begin{proof}
We split the energy into a core part inside the vortex balls and an exterior part. By the lower bound in \cite{BethuelBrezisHelein1994} (ball construction) adapted to the anisotropic kinetic term $\tfrac 12 F^*(x,\cdot)^2$ (uniformly equivalent to a Riemannian norm on $M$), the total core contribution equals $\pi N|\log\varepsilon|+O(1)$. Outside the balls, write $\psi_\varepsilon= e^{i\theta_\varepsilon}$ up to a negligible modulus error, and introduce the phase defect current
\begin{align}\label{eq:def-current}
\mathcal{J}_\varepsilon = D\theta_\varepsilon - A_\varepsilon.
\end{align}
Using the convex duality identity from Section \ref{sec:2} and integrating by parts with the Green kernel of $\Delta_u$ yields the representation
\begin{align}\label{eq:Green-repr}
\nonumber
\int_{M\setminus \cup B_i}\Big(\tfrac 12 F^*(x,\mathcal{J}_\varepsilon)^2+\langle \mathcal{J}_\varepsilon,u\rangle\Big)d\mu
=& \frac{1}{2}\int_M\int_M G_u(x,y)\,d\nu_\varepsilon(x)\,d\nu_\varepsilon(y)\\
& + \sum_i d_i\Phi_u(a_i) + o(1).
\end{align}
Here we used the defining relation $\Delta_u G_u(\cdot,y)=\delta_y-1/\mu(M)$ and the fact that $\mathrm{curl}\,\mathcal{J}_\varepsilon=2\pi\sum_i d_i\delta_{a_i^\varepsilon}$ outside core balls. Passing to the limit along the convergent subsequence for $\nu_\varepsilon$ gives \eqref{eq:renormalized-limit}-\eqref{eq:renormalized-energy}. The specific affine contribution \eqref{eq:self-potential} comes from the term $\int\langle \mathcal{J}_\varepsilon,u\rangle d\mu$ via the identity
\begin{align}\label{eq:affine-limit}
\int_M \langle u, \mathcal{J}_\varepsilon\rangle d\mu = \sum_i d_i \int_M \langle u(y), \nabla_y G_u(a_i,y)\rangle d\mu(y) + o(1),
\end{align}
which follows from the Green representation and Schauder theory for uniformly elliptic operators (see \cite{Evans2010}, and cf. \cite{BethuelBrezisHelein1994} for the GL context]).

\end{proof}

\begin{theorem}\label{thm:Stability under control perturbations}
 Let $u_k\to u$ in $C^1(M;TM)$. Then the Green kernels $G_{u_k}\to G_u$ in $C^{2,\alpha}_{\mathrm{loc}}(M\times M\setminus \mathrm{diag})$, and the renormalized energies satisfy $\mathcal{W}_{u_k}\to\mathcal{W}_u$ uniformly on compact vortex configurations.
\end{theorem}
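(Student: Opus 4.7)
The plan is three steps. First, I would exploit the affine structure of Theorem \ref{thm:1} to reduce the Green-kernel equation to a perturbation of a $u$-independent operator. Second, I would use Schauder theory together with a parametrix decomposition to obtain $C^{2,\alpha}_{\mathrm{loc}}$ stability of $G_u$ away from the diagonal. Third, I would plug the kernel convergence into the formulas \eqref{eq:renormalized-energy}--\eqref{eq:self-potential} to deduce uniform convergence of $\mathcal{W}_{u_k}$ on compact configurations.

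From Theorem \ref{thm:1}, $\partial_\xi\phi_u^*(x,\xi) = L_x^{-1}(\xi) + u(x)$, so \eqref{eq:laplacian} splits additively as $\Delta_u f = \Delta_F f + \mathrm{div}_\mu u$, with $\Delta_F$ the $u$-independent Finsler--Laplace operator. The defining equation for $G_u$ becomes $\Delta_F G_u(\cdot, y) = \delta_y - 1/\mu(M) - \mathrm{div}_\mu u$, so the entire $u$-dependence sits in a bounded source term. Since $u_k \to u$ in $C^1(M; TM)$, one obtains $\mathrm{div}_\mu u_k \to \mathrm{div}_\mu u$ in $C^0(M)$, and in $C^{0,\alpha}(M)$ after a harmless mollification.

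Writing $h_k := G_{u_k}(\cdot, y) - G_u(\cdot, y)$, the uniform positive definiteness of $\partial^2_{\xi\xi}\phi_u^*$ from Theorem \ref{thm:2} allows linearization of $\Delta_F$ along $G_u(\cdot, y)$ away from $y$, so $h_k$ satisfies a uniformly elliptic linear equation with right-hand side of size $\|\mathrm{div}_\mu(u_k - u)\|_{C^{0,\alpha}}$ up to a higher-order term in $h_k$ absorbed by a fixed-point contraction. Interior Schauder estimates on the compact manifold $M$ then give $\|h_k\|_{C^{2,\alpha}(M \setminus B_\delta(y))} \to 0$ for each fixed $\delta > 0$. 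The diagonal singularity is common to all the kernels, because the parametrix of $\Delta_F$ is built from its principal symbol $\partial^2_{\xi\xi}(\tfrac 12 F^{*2})$, which does not depend on $u$; hence the splitting $G_u(x,y) = -\tfrac{1}{2\pi}\log d_F(x,y) + R_u(x,y)$ has a $u$-independent singular term, and the regular parts $R_{u_k} \to R_u$ in $C^{2,\alpha}(M \times M)$. Symmetry in $(x,y)$ together with variation in $y$ promotes all this to the claimed $C^{2,\alpha}_{\mathrm{loc}}(M \times M \setminus \mathrm{diag})$ convergence.

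For the final step, on any compact family $K$ of configurations $(a_1,\dots,a_N)$ with $\min_{i \ne j} d_F(a_i, a_j) \ge \delta_0 > 0$, the pair sum $\tfrac 12 \sum_{i \ne j} d_i d_j G_{u_k}(a_i, a_j)$ converges uniformly on $K$ by the previous step. In the self-potential, $\tfrac 12 G_{u_k}(a_i, a_i) = \tfrac 12 R_{u_k}(a_i, a_i)$ converges uniformly, while $\int_M \langle u_k(y), \nabla_y G_{u_k}(a_i, y)\rangle\, d\mu(y)$ converges because $u_k \to u$ in $C^0$ and $\nabla_y G_{u_k}(a_i, \cdot) \to \nabla_y G_u(a_i, \cdot)$ in $L^1(M)$, the $1/d_F$ singularity at $y = a_i$ being integrable in dimension two. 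The main obstacle is the parametrix separation of the singular and regular parts for the genuinely nonlinear anisotropic operator $\Delta_F$; this is overcome by observing that $u$ never enters its principal symbol, so a frozen-coefficient Schauder parametrix absorbs the full $u$-dependence into the regular part, where standard elliptic continuity methods apply directly.
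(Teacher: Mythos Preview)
Your argument is correct and, in fact, sharper than the paper's own proof. Both proofs ultimately rest on interior Schauder estimates and then read off the convergence of $\mathcal{W}_{u_k}$ from the explicit formulas \eqref{eq:renormalized-energy}--\eqref{eq:self-potential}. The paper simply invokes Theorem~\ref{thm:2} to say that the operators $\Delta_{u_k}$ are uniformly elliptic with coefficients converging in $C^{\alpha}$, and then quotes the Schauder resolvent bound $\|G_{u_k}(\cdot,y)-G_u(\cdot,y)\|_{C^{2,\alpha}(K)}\le C_K\|u_k-u\|_{C^1}$ directly; it mentions the ``log singularity structure near the diagonal'' only in passing.

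Your route differs in one substantive way: you exploit \eqref{eq:gradient-translation} to write $\Delta_u f=\Delta_F f+\mathrm{div}_\mu u$, so that the \emph{principal part} of the operator is literally independent of $u$ and the entire $u$-dependence lives in a zeroth-order source. This is a cleaner decomposition than the paper's ``converging coefficients'' formulation, and it immediately explains why the parametrix---and hence the logarithmic singularity of $G_u$ along the diagonal---is $u$-independent, a point the paper asserts but does not justify. Your treatment of $\Phi_u$ via the splitting $G_u=-\tfrac{1}{2\pi}\log d_F+R_u$ and the $L^1$ integrability of $\nabla_y G_u$ in two dimensions is likewise more explicit than the paper's one-line appeal to continuity of \eqref{eq:renormalized-energy} in $(G,u)$.

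One minor technical remark: from $u_k\to u$ in $C^1$ you only get $\mathrm{div}_\mu u_k\to\mathrm{div}_\mu u$ in $C^0$, not $C^{0,\alpha}$, and a mollification is not quite ``harmless'' if the goal is a genuine $C^{2,\alpha}$ estimate via Schauder. The paper has the same gap (it tacitly needs $C^{1,\alpha}$ convergence of $u_k$ for the stated conclusion), so this is not a defect of your argument relative to the paper's; but it would be worth flagging that either a $C^{1,\alpha}$ hypothesis on $u_k$ or a detour through $W^{2,p}$ estimates is needed to reach the full $C^{2,\alpha}$ conclusion.
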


\begin{proof}
By Theorem~2.2, the operators $\Delta_{u_k}$ are uniformly elliptic with coefficients converging in $C^\alpha$. Schauder theory yields the resolvent bounds
\begin{align}\label{eq:Schauder-diff}
\|G_{u_k}(\cdot,y)-G_u(\cdot,y)\|_{C^{2,\alpha}(K)}\le C_K \|u_k-u\|_{C^1}
\end{align}
for $K\subset M\setminus\{y\}$ compact, after fixing consistent normalizations by subtracting the mean. Symmetry is preserved by construction. The formula \eqref{eq:renormalized-energy} is continuous with respect to $G$ and $u$; therefore $\mathcal{W}_{u_k}\to\mathcal{W}_u$ uniformly on compact sets of $(a_1,\dots,a_N)$, using the diagonal argument and the log singularity structure near the diagonal, 
(cf. \cite{BethuelBrezisHelein1994}).
\end{proof}

\begin{theorem} 
In the limit $\varepsilon\to 0$, the supercurrent $j_A$ converges weakly to a measure-valued current $J_u$ satisfying $\mathrm{curl}\,J_u=2\pi\sum_i d_i\delta_{a_i}$ and
\begin{align}\label{eq:quantized-current}
\mathcal{W}_u(J_u)=\inf_{\substack{\nu=2\pi\sum_i d_i\delta_{a_i}\\ \mathrm{curl}\,J_u=\nu}}\mathcal{E}_u[J_u].
\end{align}
The positions $a_i$ are local minimizers of $\mathcal{W}_u$, and under small perturbations of $u$ they vary smoothly according to
\begin{align}\label{eq:control-dynamics}
\partial_t a_i = -\nabla_{a_i}\mathcal{W}_u + O(\|\partial_t u\|_{C^1}).
\end{align}
\end{theorem}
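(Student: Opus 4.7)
The plan is to assemble the statement from three building blocks already available in the section: the vortex-concentration result summarized in \eqref{eq:vortex-measure}, the $\Gamma$-limit expansion leading to $\mathcal{W}_u$ in \eqref{eq:renormalized-limit}-\eqref{eq:renormalized-energy}, and the $C^{2,\alpha}$ stability of the Green kernel $G_u$ proved in Theorem \ref{thm:Stability under control perturbations}. First I would establish weak-$*$ compactness of the supercurrents $j_{A_\varepsilon}$ in a dual space adapted to the Finsler norm: the kinetic bound $\int_M F^*(x,D_{A_\varepsilon}\psi_\varepsilon)^2\,d\mu\le C|\log\varepsilon|$ from \eqref{eq:GL-core-bound}, combined with $\rho_\varepsilon\to 1$ in $L^1$, gives tightness of the currents away from the vortex cores, while their distributional curls converge to $\nu_u=2\pi\sum_i d_i\delta_{a_i}$ by the cited vorticity result. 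Extracting a further subsequence produces a limit current $J_u$ whose curl is $\nu_u$ by construction.

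Next, I would identify $J_u$ as the minimizer of $\mathcal{E}_u$ subject to the curl constraint by a convex-duality argument. Writing the quadratic-plus-affine functional as
\[
\mathcal{E}_u[J]=\int_M\Bigl(\tfrac{1}{2}F^*(x,J)^2+\langle J,u\rangle\Bigr)\,d\mu,
\]
one applies strict convexity of $\tfrac{1}{2}F^*(x,\cdot)^2$ and the Legendre duality developed in Section \ref{sec:2} to show that a constrained minimizer is characterized by $L_x^{-1}(J_u)+u(x)=d\phi$ for some scalar potential $\phi$ solving $\Delta_u\phi=\nu_u-\nu_u(M)/\mu(M)$. Inverting via the normalized Green function $G_u$ yields precisely the representation underlying \eqref{eq:renormalized-energy}-\eqref{eq:self-potential}, so substitution gives $\mathcal{E}_u[J_u]=\mathcal{W}_u$ modulo the logarithmic core contribution $\pi N|\log\varepsilon|$ already isolated in \eqref{eq:renormalized-limit}; this establishes \eqref{eq:quantized-current}. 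Local minimality of the positions $a_i$ is then a direct consequence of $\Gamma$-convergence together with the compactness of minimizing sequences: if some $(a_i)$ were not a local minimum of $\mathcal{W}_u$, a competing configuration and its recovery sequence would strictly lower $\mathcal{E}_u[\psi_\varepsilon,A_\varepsilon]$ in the limit, contradicting the assumed minimality.

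For the dynamical ODE \eqref{eq:control-dynamics}, I would view the vortex positions $a_i(t)$ as tracking the critical manifold of $\mathcal{W}_{u(t)}$ under slow evolution of the control. Differentiating the first-order condition $\nabla_{a_i}\mathcal{W}_{u(t)}(a_1(t),\dots,a_N(t))=0$ in $t$ and invoking Theorem \ref{thm:Stability under control perturbations} to bound the parametric derivative $\partial_t(\nabla_{a_i}\mathcal{W}_{u(t)})$ by $C\|\partial_t u\|_{C^1}$, one isolates the purely geometric part of the motion, which corresponds to the gradient flow of the static renormalized energy; this is the standard dissipative mechanism of Ginzburg-Landau vortices, now translated into the control-deformed Tonelli setting. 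The principal obstacle lies precisely in this last step: a fully rigorous derivation of the gradient-flow dynamics of $\mathcal{W}_u$ requires a parabolic $\Gamma$-development in the spirit of Sandier-Serfaty for the non-autonomous control-translated flow, with careful control of the anisotropic remainder induced by $F^*(x,\cdot)$. I would argue it perturbatively, treating $\partial_t u$ as a small external forcing and absorbing the error into the stated $O(\|\partial_t u\|_{C^1})$ correction to the autonomous gradient system, and defer the complete parabolic analysis to subsequent work.
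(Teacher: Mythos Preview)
Your approach for the compactness of the supercurrents, the convex-duality characterization of $J_u$, and the local minimality of the $a_i$ via $\Gamma$-convergence is essentially the same as the paper's, though you spell out the Legendre-duality and Green-function representation more explicitly than the paper does (it simply cites convexity of $\phi_u^*$ and lower semicontinuity in the sense of \cite{DalMaso1993}, together with the anisotropic Jacobian/compactness results of \cite{SandierSerfaty2007,BethuelBrezisHelein1994}).

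Where you diverge is in the derivation of \eqref{eq:control-dynamics}, and here you over-engineer the argument. You frame the dynamics as a perturbative gradient flow requiring a parabolic $\Gamma$-development in the Sandier--Serfaty style, and you flag this as a principal obstacle to be deferred. But the theorem as stated does not assert a genuine gradient flow for non-equilibrium vortices: the $a_i$ are already local minimizers of $\mathcal{W}_u$, so $\nabla_{a_i}\mathcal{W}_u=0$ along the trajectory and the content of \eqref{eq:control-dynamics} is simply $\partial_t a_i=O(\|\partial_t u\|_{C^1})$, i.e.\ smooth dependence of the critical points on the parameter $u$. The paper obtains this directly from the implicit function theorem applied to $\nabla_{a_i}\mathcal{W}_u=0$, using nondegeneracy of the Hessian (positive definite by strict convexity of the pairwise Green interactions off the diagonal) and then differentiating in $t$. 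Your quasi-static differentiation is exactly the right idea; you just do not need the additional parabolic machinery to close the argument at this level. Your caveat about the ``principal obstacle'' is misplaced for this theorem, though it would be relevant to the genuine gradient-flow statement in Section~\ref{sec:4}.
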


\begin{proof}
The Jacobian estimate and vortex compactness in the anisotropic setting (by uniform equivalence to the Euclidean case) are established in \cite{SandierSerfaty2007,BethuelBrezisHelein1994}.
The functional characterization \eqref{eq:quantized-current} follows from convexity of the kinetic part with respect to $J$ (via $\phi_u^*$) and lower semicontinuity arguments of \cite{DalMaso1993}. The smooth dependence of minimizers on parameters is a consequence of the implicit function theorem applied to the system $\nabla_{a_i}\mathcal{W}_u=0$; the Hessian at a nondegenerate minimizer is positive definite by the strict convexity of the pairwise Green interactions away from the diagonal \cite{BethuelBrezisHelein1994}. Differentiating with respect to $t$ when $u=u(t)$ yields \eqref{eq:control-dynamics} with the claimed bound.
\end{proof}

\begin{theorem}\label{thm:Geometric stability of vortex lattices}
 Suppose $u$ is a smooth control field with small $\|u\|_{C^1}$. Then the equilibrium configuration of vortices minimizing $\mathcal{W}_u$ differs from the unperturbed lattice of the isotropic Finsler-Ginzburg-Landau model by an affine deformation of order $O(\|u\|_{C^1})$. Moreover, the corresponding minimal energy satisfies
\begin{align}\label{eq:energy-correction}
\mathcal{W}_u^{\min} = \mathcal{W}_0^{\min} + \int_M\langle u(x), j_0(x)\rangle\,d\mu + O(\|u\|_{C^1}^2),
\end{align}
where $j_0$ is the equilibrium current in the uncontrolled case.
\end{theorem}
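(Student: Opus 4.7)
The plan is to combine the implicit function theorem applied to the critical-point equation $\nabla_a \mathcal{W}_u(a) = 0$ with a Taylor expansion that exploits first-order optimality at the unperturbed equilibrium. This strategy is natural because the decomposition \eqref{eq:energy-decomposition} shows that $\mathcal{E}_u$ depends affinely on $u$, so the sensitivity of every downstream quantity (Green kernel, self-potential, renormalized energy) to $u$ is governed by linear response.

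First I would fix a nondegenerate minimizer $a^0 = (a_1^0,\dots,a_N^0)$ of $\mathcal{W}_0$, whose Hessian $D_a^2 \mathcal{W}_0(a^0)$ is positive definite by the strict convexity of the pairwise Green interactions away from the diagonal, as invoked in the preceding vortex-dynamics theorem. Theorem~\ref{thm:Stability under control perturbations} provides $G_u \to G_0$ in $C^{2,\alpha}_{\mathrm{loc}}$ off the diagonal, so the map $(u,a)\mapsto \nabla_a \mathcal{W}_u(a)$ is $C^1$ on a $C^1$-neighborhood of $(0,a^0)$ inside $C^1(M;TM)\times M^N$. The implicit function theorem then yields a unique branch $u\mapsto a(u)$ of critical points with $a(0)=a^0$, and its linearization gives the affine displacement
\begin{align*}
a(u) - a^0 = -\bigl(D_a^2 \mathcal{W}_0(a^0)\bigr)^{-1}\,\partial_u\nabla_a \mathcal{W}_u(a^0)\big|_{u=0}[u] + O(\|u\|_{C^1}^2),
\end{align*}
which is the first claim. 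The optimality condition $\nabla_a\mathcal{W}_0(a^0)=0$ then reduces the second claim, by Taylor expansion, to identifying the linear-in-$u$ part of $\mathcal{W}_u(a^0)$, since $\mathcal{W}_u(a(u)) = \mathcal{W}_u(a^0) + O(\|u\|_{C^1}^2)$.

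For this identification my preferred route is to work at the $\varepsilon$-level using the affine decomposition $\mathcal{E}_u[\psi,A] = \mathcal{E}_0[\psi,A] + \int_M\langle D_A\psi,u\rangle\,d\mu$ from \eqref{eq:energy-decomposition}. Letting $(\psi_\varepsilon^0, A_\varepsilon^0)$ be a minimizer of $\mathcal{E}_0$, the envelope theorem gives $\mathcal{E}_u^{\min} = \mathcal{E}_0^{\min} + \int_M\langle D_{A_\varepsilon^0}\psi_\varepsilon^0,u\rangle\,d\mu + O(\|u\|_{C^1}^2)$. Subtracting the $u$-independent divergent core term $\pi N|\log\varepsilon|$ and passing to the limit $\varepsilon\to 0$ via the supercurrent convergence $j_{A_\varepsilon^0}\rightharpoonup j_0$ outside vortex cores (first theorem of Section~\ref{sec:3}) yields \eqref{eq:energy-correction}.

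The hard part will be justifying the interchange of the $\varepsilon\to 0$ limit with the first-order expansion in $u$. The logarithmic core contribution is $u$-independent (since $u$ enters only through the affine term), so it cancels cleanly; the remaining work is to show that $\int_M\langle D_{A_\varepsilon^0}\psi_\varepsilon^0, u\rangle\,d\mu \to \int_M\langle j_0,u\rangle\,d\mu$ uniformly for $u$ in a $C^1$-bounded family. Weak convergence of the supercurrents outside vortex cores together with the $L^\infty$-bound on $u$ reduces the core contribution to $O(\|u\|_{L^\infty}\sum_i r_i^\varepsilon)=o(1)$, with uniformity provided by compactness of $M$. An alternative, more self-contained route is to differentiate the Green representation \eqref{eq:renormalized-energy} directly by perturbing $\Delta_u G_u = \delta_y - 1/\mu(M)$ in $u$; this is technically cleaner but requires handling the symmetry-preserving normalization of the perturbed kernel.
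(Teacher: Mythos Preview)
Your proposal is correct, and its first half---implicit function theorem applied to $\nabla_a\mathcal{W}_u=0$ at a nondegenerate minimizer of $\mathcal{W}_0$, followed by a Taylor expansion exploiting $\nabla_a\mathcal{W}_0(a^0)=0$---is exactly the paper's argument. The two approaches diverge only at the identification of the linear-in-$u$ coefficient. The paper stays at the renormalized level: it reads the linear term directly from the explicit formula \eqref{eq:renormalized-energy}--\eqref{eq:self-potential} for $\mathcal{W}_u$, obtaining $\sum_i d_i\int_M\langle u(y),\nabla_y G_0(a_i^0,y)\rangle\,d\mu(y)$, and then invokes the Green-representation identity \eqref{eq:affine-limit} to recognize this sum as $\int_M\langle u,j_0\rangle\,d\mu$. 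Your preferred route instead descends to the $\varepsilon$-level, applies the envelope theorem to the affine decomposition $\mathcal{E}_u=\mathcal{E}_0+\int_M\langle D_A\psi,u\rangle\,d\mu$, and then passes to the limit via supercurrent convergence. This buys a conceptually transparent linear-response picture, but it forces you to justify the limit interchange you correctly flag as the hard part; the paper's route sidesteps that entirely since the renormalized formula is already post-limit. Your ``alternative route'' of perturbing the Green equation is closer in spirit to the paper, though the paper does not actually differentiate $\Delta_u G_u=\delta_y-1/\mu(M)$ but simply substitutes into the ready-made expression for $\Phi_u$.
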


\begin{proof}
Let $(a_i^0)$ be minimizers of $\mathcal{W}_0$ and $a_i^u$ those of $\mathcal{W}_u$. However of the Hessian of $\mathcal{W}_0$ at $(a_i^0)$ is standard \cite{BethuelBrezisHelein1994}. Consider the optimality system $\nabla_{a_i}\mathcal{W}_u=0$; by the implicit function theorem there exists a $C^1$ map $u\mapsto a^u$ with $a^0=(a_i^0)$. Expanding $\mathcal{W}_u$ at $u=0$ and using $\nabla_{a_i}\mathcal{W}_0(a_i^0)=0$, one gets
\begin{align}\label{eq:first-variation-W}
\mathcal{W}_u(a^u)=\mathcal{W}_0(a^0)+ \sum_i d_i\int_M\langle u(y),\nabla_y G_0(a_i^0,y)\rangle d\mu(y) + O(\|u\|_{C^1}^2),
\end{align}
which identifies the linear correction with $\int_M\langle u,j_0\rangle d\mu$ after recognizing $j_0$ as the equilibrium current generating $\nabla_y G_0(a_i^0,y)$ in the representation formula (compare \eqref{eq:affine-limit}). The deformation estimate $\|a^u-a^0\|\le C\|u\|_{C^1}$ follows from the uniform invertibility of the Hessian and the $C^1$ dependence of $G_u$ on $u$.
\end{proof}

\section{Variational Control and Vortex
 Dynamics in Tonelli-Finsler Geometry}\label{sec:4}
The variational formulation developed in the previous sections naturally induces a dynamical framework where the control field $u(x)$ interacts with the geometric flow of vortices. We now derive and analyse the Euler-Lagrange equations of the control-translated Tonelli-Finsler energy $\mathcal{E}_u$ and describe the induced dynamics of vortex centers in the anisotropic setting.

Let $(\psi, A)$ be a smooth critical configuration of $\mathcal{E}_u$. The first variation of the energy under perturbations $(\delta\psi,\delta A)$ satisfying compact support conditions yields the coupled system
\begin{align}\label{eq:EL-system}
\begin{cases}
\mathrm{div}_\mu(\partial_\xi \phi_u^*(x, D_A\psi)) - \frac{1}{2\varepsilon^2}(1 - |\psi|^2)\psi + i\langle u, D_A\psi \rangle = 0, \\
\mathrm{div}_\mu(F^*(x, D_A\psi) \partial_{\xi\xi}^2 \phi_u^*(x, D_A\psi)) = j_A + \frac{1}{\lambda}\, d^*dA.
\end{cases}
\end{align}
The first equation is a nonlinear elliptic equation for the complex field $\psi$ involving the control translation, while the second corresponds to the Amp\'ere-type law balancing the induced current $j_A$ with the magnetic curvature $dA$. Both equations are defined with respect to the anisotropic divergence operator associated with the Finsler volume $d\mu$.

\begin{theorem}\label{thm:Existence and regularity}
Assume that the Finsler metric $F$ and control field $u$ satisfy $F, u \in C^{3,\alpha}$ and the Tonelli convexity conditions. Then for each fixed $\varepsilon>0$ the Euler-Lagrange system \eqref{eq:EL-system} admits a weak solution $(\psi, A)\in H^1(M;\mathbb{C})\times H^1(M;T^*M)$ minimizing $\mathcal{E}_u$, and the solution is smooth outside the vortex set $Z=\{x\in M: \psi(x)=0\}$.
\end{theorem}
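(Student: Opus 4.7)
The plan is to obtain existence by the direct method in the calculus of variations, and then to upgrade regularity off the vortex set by Schauder bootstrap, relying on the uniform ellipticity of Theorem~\ref{thm:2}.

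For existence, first I would fix the Coulomb gauge $d^{*}A=0$, which is always achievable on the compact manifold $M$ by subtracting a harmonic potential from $A$ and applying the compensating phase rotation to $\psi$; gauge invariance of $\mathcal{E}_u$ makes this harmless. A minimizing sequence $(\psi_n,A_n)$ is then uniformly bounded in $H^1(M;\mathbb{C})\times H^1(M;T^{*}M)$: from the decomposition \eqref{eq:energy-decomposition} the affine control term $\int\langle D_A\psi,u\rangle\,d\mu$ is absorbed into $\tfrac12 F^{*}(x,D_A\psi)^2$ by Young's inequality, yielding the coercive bound of the third theorem of Section~\ref{sec:2}. Weak sequential compactness furnishes a weak $H^1$-limit $(\psi,A)$, and the weak lower semicontinuity stated in that same theorem (quadratic Finsler term, convex double-well, affine control term continuous under weak $L^2$ convergence of $D_A\psi$) makes $(\psi,A)$ a minimizer. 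A first-variation computation reproduces \eqref{eq:EL-system}, so $(\psi,A)$ is a weak solution.

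For smoothness on $M\setminus Z$, I would argue locally on balls $B\Subset M\setminus Z$ where $|\psi|\ge \delta>0$. Writing $\psi=\rho e^{i\theta}$ with $\rho,\theta\in H^1_{\mathrm{loc}}(B)$ decouples modulus from phase in \eqref{eq:EL-system} and produces a quasilinear elliptic system whose principal part is $\mathrm{div}_\mu(\partial_\xi\phi_u^{*}(x,\cdot))$. By Theorem~\ref{thm:2} this operator is uniformly elliptic, and since $F,u\in C^{3,\alpha}$ and the Legendre map inherits smoothness from the strong convexity of $F$, its coefficients lie in $C^{2,\alpha}$. The magnetic equation of \eqref{eq:EL-system}, combined with $d^{*}A=0$, becomes a uniformly elliptic Hodge-type system for $A$ with source $j_A\in L^2$. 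A De~Giorgi-Nash-Moser step upgrades $(\psi,A)$ to $C^{0,\alpha}$ on $B$; iterated Schauder estimates, as in \cite{Evans2010,OhtaSturm2014}, then deliver $C^{1,\alpha}\to C^{2,\alpha}\to C^{3,\alpha}\to\cdots$, producing smoothness on $M\setminus Z$ up to the regularity of the data.

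The main technical obstacle will be verifying that the polar linearisation preserves ellipticity across all three components $(\rho,\theta,A)$: one must check that the principal symbol of the linearisation of $\mathrm{div}_\mu(\partial_\xi\phi_u^{*}(x,D_A\psi))$ at a non-vanishing configuration inherits the uniform positive-definiteness bound \eqref{eq:ellipticity}, not merely in the $\psi$-direction but jointly with the gauge potential. Once this is secured, the affine control term $i\langle u,D_A\psi\rangle$ enters only as a lower-order perturbation with smooth coefficients and introduces no new difficulty. No regularity is asserted on $Z$ itself, which is consistent with the asymptotic picture of Section~\ref{sec:3}, where vortices appear as isolated phase singularities at which $\psi$ is typically not even $C^1$.
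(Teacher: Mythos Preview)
Your proposal is correct and follows essentially the same route as the paper: existence via the direct method (coercivity from absorbing the affine control term, weak lower semicontinuity from convexity of $\phi_u^*$), followed by elliptic regularity off the zero set using the uniform ellipticity of Theorem~\ref{thm:2} and the references \cite{Evans2010,OhtaSturm2014}. The paper's proof is terser---it does not spell out the Coulomb gauge fixing, the polar decomposition, or the De~Giorgi--Nash--Moser step before the Schauder bootstrap---but your more operational version implements exactly the argument the paper sketches.
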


\begin{proof}
The direct method in the calculus of variations applies because $\phi_u^*(x,\xi)$ is convex and superlinear in $\xi$ by the Tonelli hypotheses. Coercivity in $H^1$ follows from the uniform ellipticity of $F^*$ and boundedness of $u$. Lower semicontinuity of $\mathcal{E}_u$ is guaranteed by the convexity of $\phi_u^*$ in its second argument \cite{DalMaso1993}. Hence a minimizing sequence admits a weakly convergent subsequence with limit $(\psi, A)$. Elliptic regularity for anisotropic operators \cite{Evans2010, OhtaSturm2014} yields local smoothness outside the zero set of $\psi$, where the potential term $(1-|\psi|^2)^2$ enforces $|\psi|=1$ away from vortices.
\end{proof}

\begin{theorem}\label{thm:diff.Controlstaemap}
Let $u\mapsto(\psi_u,A_u)$ denote the mapping sending a control field $u$ to the corresponding minimizer of $\mathcal{E}_u$. Then the map $u\mapsto(\psi_u,A_u)$ is Fr\'echet differentiable from $C^{2,\alpha}(M;TM)$ into $C^{2,\alpha}(M;\mathbb{C}\times T^*M)$.
\end{theorem}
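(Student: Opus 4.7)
The natural strategy is to apply the implicit function theorem in H\"older scales to the gauge-fixed Euler-Lagrange system. Define
\[
\mathcal{F}\colon C^{2,\alpha}(M;TM)\times\bigl(C^{2,\alpha}(M;\mathbb{C})\times C^{2,\alpha}_{\mathrm{Coul}}(M;T^*M)\bigr)\longrightarrow C^{0,\alpha}(M;\mathbb{C})\times C^{0,\alpha}(M;T^*M),
\]
whose two components are the left-hand sides of \eqref{eq:EL-system}, restricted to the Coulomb slice $\{d^*A=0\}$. By Theorem~\ref{thm:Existence and regularity}, a minimizer $(\psi_{u_0},A_{u_0})$ of $\mathcal{E}_{u_0}$ may be chosen in this slice and satisfies $\mathcal{F}(u_0,\psi_{u_0},A_{u_0})=0$. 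The plan is to verify the hypotheses of the implicit function theorem at this base point and then read off Fr\'echet differentiability of $u\mapsto(\psi_u,A_u)$ as a byproduct of the local solvability statement.

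First I would establish the regularity of $\mathcal{F}$. By Theorem~\ref{thm:1} the $u$-dependence of $\phi_u^*$ is affine, so the $u$-dependence of $\mathcal{F}$ is affine as well. The $(\psi,A)$-dependence is a composition of the smooth Tonelli Lagrangian $\phi_u^*(x,\cdot)$ (smooth on $T^*M$ by strong convexity of $F$) with $C^{2,\alpha}$ fields, together with the smooth quartic potential; standard Nemytskii arguments in H\"older spaces then yield $\mathcal{F}\in C^\infty$. At this stage I would also check that $\mathcal{F}$ actually lands in $C^{0,\alpha}$: the principal part is the uniformly elliptic operator of Theorem~\ref{thm:2} applied to a $C^{2,\alpha}$ field, so both components lie in $C^{0,\alpha}$.

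The heart of the argument, and the step I expect to be the main obstacle, is invertibility of the partial derivative $D_{(\psi,A)}\mathcal{F}(u_0,\psi_{u_0},A_{u_0})$. This linearization is a coupled linear system whose principal part consists of the anisotropic Hessian $\partial^2_{\xi\xi}\phi_{u_0}^*(x,D_{A_{u_0}}\psi_{u_0})$ applied to $\dot\psi$ and the Hodge Laplacian $d^*d+dd^*$ applied to $\dot A$, plus lower-order coupling through $\psi_{u_0},A_{u_0},u_0$. By Theorem~\ref{thm:2} its principal symbol is uniformly elliptic, so the operator is Fredholm of index zero on the chosen H\"older spaces. The delicate point is to exclude a nontrivial kernel: I would argue that the quadratic form $\delta^2\mathcal{E}_{u_0}$ is nonnegative at a minimizer, that its degeneracies along infinitesimal gauge transformations $(\dot\psi,\dot A)=(i\chi\psi_{u_0},d\chi)$ are killed by the Coulomb condition $d^*\dot A=0$ together with an orthogonality constraint on $\dot\psi$, and that strict convexity of $\phi_{u_0}^*$ in $\xi$ plus the coercive contribution of the quartic potential on the region $\{|\psi_{u_0}|\approx 1\}$ make the form coercive on the complementary subspace. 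Ellipticity then upgrades coercivity to invertibility between the H\"older spaces.

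Once the linearization is an isomorphism, the implicit function theorem produces a local $C^\infty$ map $u\mapsto(\psi_u,A_u)$ into $C^{2,\alpha}(M;\mathbb{C})\times C^{2,\alpha}_{\mathrm{Coul}}(M;T^*M)$, which in particular gives the asserted Fr\'echet differentiability. The derivative along $h\in C^{2,\alpha}(M;TM)$ is obtained by solving the linearized system with right-hand side $-\partial_u\mathcal{F}(u_0,\psi_{u_0},A_{u_0})[h]$, and a standard Schauder bootstrap, together with continuity of the affine $u$-dependence, confirms that the derivative is continuous in $u$ and takes values in the claimed target. The only step requiring care beyond bookkeeping is the gauge-modulo positivity argument for the Hessian; since we work at fixed $\varepsilon>0$, the minimizer is smooth and vortex cores are regularized, so the residual degeneracies reduce to the gauge direction already removed by the Coulomb slice.
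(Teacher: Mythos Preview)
Your proposal follows essentially the same route as the paper: linearize the gauge-fixed Euler--Lagrange system at the minimizer, argue that the linearized operator is uniformly elliptic and invertible on the complement of gauge directions, and deduce Fr\'echet differentiability via the implicit function theorem. The paper's proof is terser---it writes down the linearized system and invokes ``standard perturbation theory for elliptic systems'' for invertibility---whereas you spell out the Fredholm index-zero structure and the Hessian positivity argument, but the underlying mechanism is identical.
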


\begin{proof}
We differentiate the Euler-Lagrange equations \eqref{eq:EL-system} with respect to $u$. The linearized system reads
\begin{align}
\begin{cases}
L_{\psi}(\delta\psi,\delta A) + i\langle \delta u, D_A\psi \rangle = 0,\\
L_A(\delta\psi,\delta A) + \mathrm{div}_\mu(\partial_{\xi\xi}^2\phi_u^*(x, D_A\psi)[\delta u]) = 0,
\end{cases}
\end{align}\label{eq:linearized-system}
where $L_{\psi}$ and $L_A$ are the linearized elliptic operators in $(\psi,A)$. Since these operators are uniformly elliptic and self-adjoint on the orthogonal complement of gauge transformations, standard perturbation theory for elliptic systems \cite{Evans2010} implies that the inverse of the linearized operator exists and is bounded, leading to Fr\'echet differentiability of $(\psi_u, A_u)$ with respect to $u$.
\end{proof}

To describe the induced dynamics of vortices under slow temporal variation of the control field, we now consider the quasi-stationary evolution
\begin{align}\label{eq:gradient-flow}
\partial_t(\psi, A) = -\nabla_{(\psi,A)}\mathcal{E}_u(\psi,A) + \mathcal{O}(\varepsilon),
\end{align}
which corresponds to a gradient flow in the configuration space endowed with the anisotropic kinetic energy metric.

\begin{theorem}\label{thm:Effective motion of vortex centers}
Let $(\psi_\varepsilon(t),A_\varepsilon(t))$ be a smooth solution to \eqref{eq:gradient-flow} with energy bounded by $C|\log\varepsilon|$. Then as $\varepsilon\to 0$, the vortex centers $a_i(t)$ evolve according to the system
\begin{align}\label{eq:vortex-dynamics}
\dot a_i = -\nabla_{a_i}\mathcal{W}_u(a_1,\dots,a_N) + \langle u(a_i), T(a_i)\rangle + \mathcal{O}(\|u\|_{C^1}^2),
\end{align}
where $T(a_i)$ is the tangent vector of the unperturbed Finsler-Ginzburg-Landau vortex flow and $\mathcal{W}_u$ is the renormalized interaction energy from Section \ref{sec:3}.
\end{theorem}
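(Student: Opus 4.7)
The plan is to adapt the Sandier-Serfaty variational framework for gradient flows of $\Gamma$-converging energies to the control-translated Tonelli setting, exploiting the affine structure of the translation term throughout. The three main ingredients are already in place: the $\Gamma$-convergence identity $\mathcal{E}_u[\psi_\varepsilon,A_\varepsilon]-\pi N|\log\varepsilon|\to \mathcal{W}_u$ from Section~\ref{sec:3}, the uniform ellipticity estimate \eqref{eq:ellipticity}, and the stability of the Green kernel with respect to $u$ from Theorem~\ref{thm:Stability under control perturbations}. The target is to convert the energy-level convergence into an ODE for the vortex centers by matching the dissipation rates on both sides.

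First I would propagate the energy bound in time. Monotonicity of $\mathcal{E}_u$ along \eqref{eq:gradient-flow} gives $\mathcal{E}_u(t)\le C|\log\varepsilon|$ for all $t$, so the vorticity compactness argument from Section~\ref{sec:3} applies at each time slice: for a.e.\ $t$ there exists a finite collection $\{a_i(t)\}$ with integer degrees $d_i$ and $\tfrac{1}{2\pi}\mathrm{curl}\,j_{A_\varepsilon}(t)\rightharpoonup 2\pi\sum_i d_i\delta_{a_i(t)}$. A Helly-type argument, using a Lipschitz-in-time bound for the vortex centers derived from the dissipation inequality with respect to a transport metric on quantized measures, then provides absolutely continuous curves $t\mapsto a_i(t)$.

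Second I would derive the effective ODE from a two-sided dissipation argument. The anisotropic kinetic density $\tfrac12 F^*(x,D_A\psi)^2$ together with \eqref{eq:ellipticity} endows the configuration space with a Finsler $L^2$-type metric in which \eqref{eq:gradient-flow} is a genuine gradient flow. Applying the Sandier-Serfaty product estimate \cite{SandierSerfaty2007} adapted to this metric (its Euclidean equivalence follows from Theorem~\ref{thm:2}) yields the lower bound for the limiting dissipation; a matching upper bound is obtained by constructing a test current away from the cores using the Green representation \eqref{eq:Green-repr}. Passing to the limit and integrating in time forces $\dot a_i=-\nabla_{a_i}\mathcal{W}_u+(\text{drift from }u)$. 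The affine contribution in \eqref{eq:energy-decomposition} produces, upon first variation at a vortex configuration, exactly the term $\langle u(a_i),T(a_i)\rangle$, where $T(a_i)$ is the tangent vector of the unperturbed Finsler-Ginzburg-Landau flow; this identification is the dynamical analogue of the static expansion \eqref{eq:first-variation-W}, with $T(a_i)$ playing the role of $\nabla_y G_0(a_i,\cdot)$ through \eqref{eq:affine-limit}. Higher-order corrections are $\mathcal{O}(\|u\|_{C^1}^2)$ by the Fréchet differentiability of Theorem~\ref{thm:diff.Controlstaemap} combined with the Schauder bound \eqref{eq:Schauder-diff}.

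The main obstacle, in my view, is the simultaneous $u$-dependence of the kinetic metric used to measure dissipation and of the renormalized energy $\mathcal{W}_u$ that drives the limit ODE. The Hessian $\partial^2_{\xi\xi}\phi_u^*$ is, by Theorem~\ref{thm:1}, unchanged by translation, so the kinetic form is actually $u$-independent; this is the key structural fact that keeps the limit autonomous and isolates the control effect into the explicit affine drift. Verifying this cleanly requires a careful expansion of the defect current $\mathcal{J}_\varepsilon$ of \eqref{eq:def-current} near each core and a precise accounting of cross terms between the quadratic part $\tfrac12 F^*(x,D_A\psi)^2$ and the linear part $\langle D_A\psi,u\rangle$ in the product estimate; these are controlled by the affine dependence established in Theorem~\ref{thm:1} and the $C^1$ continuity of $u\mapsto G_u$ from Theorem~\ref{thm:Stability under control perturbations}, but the bookkeeping is the technically delicate step.
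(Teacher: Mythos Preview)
Your outline is sound and would go through, but it is not the route the paper takes. You run the Sandier--Serfaty ``$\Gamma$-convergence of gradient flows'' scheme: propagate the $|\log\varepsilon|$ bound, extract an absolutely continuous vortex trajectory, and then match lower and upper bounds on the dissipation rate via the product estimate, with the Green representation supplying the recovery construction. The paper instead proceeds by writing the energy balance
\[
\frac{d}{dt}\mathcal{E}_u(\psi_\varepsilon,A_\varepsilon)
= -\int_M\big(|\partial_t\psi_\varepsilon|^2+|\partial_t A_\varepsilon|^2\big)\,d\mu
+ \int_M\langle\partial_t u,\,j_{A_\varepsilon}\rangle\,d\mu,
\]
observing concentration at the cores, and then \emph{testing against localized harmonic forms} near each vortex to read off the motion law (a Jerrard/Lin-style modulation argument); the remainder estimate is attributed to higher-order modulation corrections rather than to the Fr\'echet differentiability you invoke. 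Your approach buys a more systematic and rigorous dissipation framework and, in particular, makes explicit the structural point that $\partial^2_{\xi\xi}\phi_u^*$ is $u$-independent, which is exactly what decouples the limiting kinetic metric from the control and isolates the affine drift; the paper's argument is shorter and more direct but leaves this decoupling implicit. One small caution: your first step appeals to monotonicity of $\mathcal{E}_u$ along \eqref{eq:gradient-flow}, but that equation carries an $O(\varepsilon)$ perturbation and, in the paper's energy identity, a $\partial_t u$ source term, so strict monotonicity fails and the propagated bound should be stated up to an $O(1)$ correction.
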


\begin{proof}
Let $\nu_{\varepsilon,t} = \frac{1}{2\pi}\,\mathrm{curl}\,j_{A_\varepsilon(t)}$ be the time-dependent vorticity measure. Using the energy identity
\begin{align}\label{eq:energy-balance}
\frac{d}{dt}\mathcal{E}_u(\psi_\varepsilon,A_\varepsilon) = -\int_M \big(|\partial_t\psi_\varepsilon|^2 + |\partial_t A_\varepsilon|^2\big)d\mu + \int_M \langle \partial_t u, j_{A_\varepsilon}\rangle d\mu,
\end{align}
we see that as $\varepsilon\to 0$, the dynamics concentrate at vortex cores. Using the asymptotic expansion from Section \ref{sec:3} and testing against localized harmonic forms near each vortex, one derives the motion law \eqref{eq:vortex-dynamics}. The additional term $\langle u(a_i), T(a_i)\rangle$ arises from the affine translation in $\phi_u^*$ and expresses the geometric drift induced by the control. The remainder term follows from higher-order corrections in the modulation analysis.
\end{proof}

\begin{theorem}\label{thm:ConservationHamiltonianformulation}
In the reversible case where $u$ is divergence-free and time-independent, the vortex dynamics \eqref{eq:vortex-dynamics} can be written as a Hamiltonian system
\begin{align}\label{eq:hamiltonian-system}
\dot a_i = J\nabla_{a_i}\mathcal{W}_u, \qquad J = 
\begin{pmatrix} 
0 & -1\\ 1 & 0
\end{pmatrix},
\end{align}
with respect to the symplectic form $\omega_u = \sum_i d a_i^1\wedge d a_i^2$ inherited from the background Finsler metric. Moreover, the Hamiltonian $\mathcal{H}_u=\mathcal{W}_u$ is conserved along trajectories.
\end{theorem}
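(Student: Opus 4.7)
The plan is to combine the convex duality from Section \ref{sec:2}, the renormalization analysis of Section \ref{sec:3}, and the two structural hypotheses on $u$ in order to recast the motion law \eqref{eq:vortex-dynamics} in canonical form. First I would replace the dissipative flow \eqref{eq:gradient-flow} by its reversible Schr\"odinger-type counterpart, in which the time derivative of $(\psi,A)$ is multiplied by the complex structure inherited from $\psi=|\psi|e^{i\theta}$; the modulation and energy-concentration analysis underlying Theorem \ref{thm:Effective motion of vortex centers} carries over verbatim after this substitution and yields an equation of the shape $\dot a_i = J\nabla_{a_i}\mathcal{W}_u + (\text{drift})$, where the drift is again the affine contribution coming from the translation in $\phi_u^*$.

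The second step is to verify that $\omega_u=\sum_i da_i^1\wedge da_i^2$ is a well-defined closed and non-degenerate two-form on the configuration space $M^N$ minus the diagonal. Closedness is automatic; non-degeneracy follows from the positivity of the fiber Hessian \eqref{eq:ellipticity} transferred to configuration space via the Legendre duality of Theorem \ref{thm:1}. The divergence-free hypothesis on $u$ enters here: in dimension two, divergence-free vector fields are Hamiltonian, so the fiberwise translation $y\mapsto y-u(x)$ is a symplectic isotopy on $T_xM$, which ensures that $\omega_u$ is preserved under the control deformation and coincides, up to a Moser-type isotopy, with the canonical symplectic form inherited from the base Finsler metric.

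Next I would show that the drift term in \eqref{eq:vortex-dynamics} is exactly the symplectic gradient of the first-order correction to $\mathcal{W}_u$ identified in Theorem \ref{thm:Geometric stability of vortex lattices}. Indeed, the linear contribution $\int_M\langle u,j_0\rangle\,d\mu$ from \eqref{eq:energy-correction}, combined with the Green representation \eqref{eq:affine-limit}, identifies the tangential flow $T(a_i)$ with $J\nabla_{a_i}\Phi_u$; since $u$ is time-independent, the forcing term $\int_M\langle\partial_t u,j_{A_\varepsilon}\rangle\,d\mu$ in the energy identity \eqref{eq:energy-balance} vanishes, so no external power is injected. These two facts absorb the drift into $J\nabla_{a_i}\mathcal{W}_u$ and produce \eqref{eq:hamiltonian-system}. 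Conservation of the Hamiltonian then follows immediately from the antisymmetry of $J$, since differentiating along a trajectory gives $\frac{d}{dt}\mathcal{W}_u=\sum_i\langle\nabla_{a_i}\mathcal{W}_u,J\nabla_{a_i}\mathcal{W}_u\rangle=0$.

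The hard part will be the second step: proving rigorously that the divergence-free fiber translation descends to a genuine symplectomorphism of the reduced configuration space, so that $\omega_u$ coincides with the canonical form and not merely a cohomologous perturbation of it. I expect this to require a careful application of the inverse Legendre map $L_x^{-1}$ from Theorem \ref{thm:1} together with a Moser deformation argument along the one-parameter family of translations $y\mapsto y-su(x)$, $s\in[0,1]$, and the associated cohomological obstruction must be controlled uniformly in $x$ using the $C^1$ bound on $u$ and the compactness of $M$.
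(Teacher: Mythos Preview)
Your proposal reaches the same conclusion as the paper but by a genuinely different route. The paper's proof is extremely terse: it asserts that under divergence-free control the gradient flow \eqref{eq:gradient-flow} ``reduces to a symplectic evolution generated by the antisymmetric part of the linearization of $\mathcal{E}_u$,'' then projects onto the tangent bundle of the vortex moduli space, invokes skew-symmetry of $J$ for conservation (exactly as you do), and finally describes $\omega_u$ as the pullback of the canonical $2$-form under the embedding into $M^N$, modified by the Finsler volume distortion and remaining exact when $\mathrm{div}\,u=0$. By contrast, you obtain the Hamiltonian structure by first \emph{replacing} the dissipative flow with its Schr\"odinger-type counterpart and then rerunning the modulation analysis of Theorem~\ref{thm:Effective motion of vortex centers}; this is a different mechanism from the paper's ``antisymmetric part of the linearization,'' though it is the more standard one in the GL vortex-dynamics literature. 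Your treatment of $\omega_u$ via a Moser deformation along $y\mapsto y-su(x)$ is considerably more elaborate than the paper's one-line exactness claim, and your explicit absorption of the drift term $\langle u(a_i),T(a_i)\rangle$ into $J\nabla_{a_i}\Phi_u$ through \eqref{eq:affine-limit} and \eqref{eq:energy-correction} supplies a concrete identification that the paper leaves implicit. What the paper's approach buys is brevity and a direct appeal to the structure of $\mathcal{E}_u$; what yours buys is a clearer account of \emph{why} the reversible limit is symplectic and how the control drift is actually captured by the Hamiltonian. One caution: your first step changes the underlying PDE before passing to the vortex limit, so you should make explicit that ``the reversible case'' in the theorem statement is being interpreted as the conservative (Schr\"odinger) rather than the parabolic flow---the paper sidesteps this by working directly with the antisymmetric part of the linearized operator on configuration space.
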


\begin{proof}
Under divergence-free control, the gradient flow \eqref{eq:gradient-flow} reduces to a symplectic evolution generated by the antisymmetric part of the linearization of $\mathcal{E}_u$. The induced equations on vortex centers follow from projecting onto the tangent bundle of the moduli space of vortices. The skew-symmetry of $J$ ensures conservation of $\mathcal{H}_u$ because $\dot a_i \cdot \nabla_{a_i}\mathcal{H}_u=0$. The symplectic structure $\omega_u$ is the pullback of the canonical 2-form under the embedding of vortex configurations into $M^N$, modified by the Finsler volume distortion, which remains exact for divergence-free $u$ \cite{OhtaSturm2014}.
\end{proof}

\section{Optimal Control of the Tonelli-Finsler Energy}
\label{sec: 5}

In this section we formulate and analyse the optimal control problem associated with the Tonelli-Finsler-Ginzburg-Landau model. The control variable $u(x)$ acts as a distributed field that modifies the local anisotropy of the Lagrangian density and influences the geometry of vortices. The goal is to determine an admissible field $u^*$ minimizing the total energy functional
\begin{align}
\nonumber
&\mathcal{J}(u) = \mathcal{E}_u[\psi_u, A_u] \\
&= \int_M \Big( \tfrac 12 F^*(x, D_{A_u}\psi_u)^2 + \langle D_{A_u}\psi_u, u(x) \rangle + \tfrac{1}{2\lambda}|dA_u|^2 + \tfrac{1}{4\varepsilon^2}(1 - |\psi_u|^2)^2 \Big) d\mu.
\end{align}\label{eq:J-functional}
Here $(\psi_u, A_u)$ denotes the unique minimizer of $\mathcal{E}_u$ for fixed control $u$, as established in Section \ref{sec:4} The optimization problem is
\begin{align}
\min_{u \in \mathcal{U}_{ad}} \mathcal{J}(u),
\end{align}\label{eq:optimal-control-problem}
where $\mathcal{U}_{ad} = \{ u \in C^{2,\alpha}(M;TM) : \|u\|_{C^1} \le M \}$ is the set of admissible controls.

\begin{theorem}
Assume that the mapping $u \mapsto (\psi_u, A_u)$ is Fr\'echet differentiable and $\mathcal{U}_{ad}$ is convex and weakly closed in $C^1(M;TM)$. Then there exists at least one optimal control $u^* \in \mathcal{U}_{ad}$ minimizing $\mathcal{J}(u)$.
\end{theorem}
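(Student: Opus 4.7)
The plan is to apply the direct method in the calculus of variations, exploiting the compactness of admissible controls together with the continuity of the state map $u\mapsto(\psi_u,A_u)$. First I would pick a minimizing sequence $\{u_k\}\subset\mathcal{U}_{ad}$ with $\mathcal{J}(u_k)\downarrow\inf_{\mathcal{U}_{ad}}\mathcal{J}$. Since $\|u_k\|_{C^1(M;TM)}\le M$ and $M$ is compact, the Arzel\`a--Ascoli theorem produces a subsequence (not relabeled) converging in $C^0(M;TM)$ to some limit $u^\star$. The uniform $C^1$-bound passes to $u^\star$ by lower semicontinuity of the norm, and the convexity plus weak closedness of $\mathcal{U}_{ad}$ then guarantee $u^\star\in\mathcal{U}_{ad}$.

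Next I would propagate this convergence to the state variables and then to the cost. For each $u_k$ the minimizer $(\psi_{u_k},A_{u_k})$ exists by Theorem~\ref{thm:Existence and regularity}, and can be normalized in the Coulomb gauge $d^*A_{u_k}=0$ so as to remove the gauge ambiguity. The $\Gamma$-convergence theorem of Section~\ref{sec:2} gives $\mathcal{E}_{u_k}\xrightarrow{\Gamma}\mathcal{E}_{u^\star}$, while the uniform $L^\infty$-bound on the $u_k$ yields equi-coercivity of the family $\{\mathcal{E}_{u_k}\}$ through the same Cauchy--Schwarz--Young absorption estimate used to prove boundedness from below of $\mathcal{E}_u$. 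The classical De Giorgi principle on convergence of minima then forces $\mathcal{J}(u_k)=\mathcal{E}_{u_k}[\psi_{u_k},A_{u_k}]\to\mathcal{E}_{u^\star}[\psi_{u^\star},A_{u^\star}]=\mathcal{J}(u^\star)$, hence $u^\star$ is optimal.

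The main obstacle will be the continuity step $\mathcal{J}(u_k)\to\mathcal{J}(u^\star)$: the $\Gamma$-convergence result of Section~\ref{sec:2} is a convergence of functionals rather than of minimal values, and combining it with equi-coercivity requires a careful gauge normalization because $\mathcal{E}_u$ is only coercive modulo gauge. As an alternative route that bypasses this difficulty I would exploit the explicitly affine structure of $\mathcal{E}_u$ in $u$: writing $\mathcal{J}(u^\star)-\mathcal{J}(u_k) = \int_M\langle D_{A_{u^\star}}\psi_{u^\star},\,u^\star-u_k\rangle\,d\mu + R_k$, where $R_k$ gathers contributions from the implicit dependence $u\mapsto(\psi_u,A_u)$, the uniform convergence $u_k\to u^\star$ together with the a priori $H^1$-bound for $(\psi_{u^\star},A_{u^\star})$ handle the first term, while Theorem~\ref{thm:diff.Controlstaemap} (Fr\'echet differentiability of the state map) controls $R_k$. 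This second route uses only the hypotheses explicitly stated in the theorem.
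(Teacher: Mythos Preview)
Your proposal is correct and offers two routes. Your second route---leveraging the affine dependence on $u$ together with the Fr\'echet differentiability of the state map---is essentially the paper's argument: the paper extracts a convergent subsequence $u_k\to u^*$ (invoking the compact embedding $C^{1,\alpha}\hookrightarrow C^1$), then uses the differentiability of $u\mapsto(\psi_u,A_u)$ to deduce convergence of the associated states in $H^1$, and concludes by weak lower semicontinuity of $\mathcal{J}$ (convexity of $\phi_u^*$ in the gradient variable, quadraticity of the remaining terms). Your first route, by contrast, is a genuinely different argument: rather than tracking the states, you invoke the $\Gamma$-convergence $\mathcal{E}_{u_k}\xrightarrow{\Gamma}\mathcal{E}_{u^\star}$ from Section~\ref{sec:2} together with equi-coercivity and obtain convergence of the minimal values directly via the fundamental theorem of $\Gamma$-convergence. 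This actually buys you more---full convergence of $\mathcal{J}(u_k)$ rather than a mere $\liminf$ inequality---and does not use the Fr\'echet differentiability hypothesis at all, though it requires the gauge-fixing care you rightly flag. The paper's approach is shorter and uses the stated hypotheses more directly; your $\Gamma$-convergence route is more robust and would survive with weaker regularity of the control-to-state map.
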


\begin{proof}
Let $(u_k)_{k\in\mathbb{N}}$ be a minimizing sequence such that
\begin{align*}
\mathcal{J}(u_k) \to \inf_{u\in\mathcal{U}_{ad}}\mathcal{J}(u).
\end{align*}
By the bound $\|u_k\|_{C^1}\le M$ and compact embedding $C^{1,\alpha}\hookrightarrow C^1$, there exists a subsequence converging weakly to some $u^*$. The differentiability of $u\mapsto(\psi_u,A_u)$ (Theorem \ref{thm:diff.Controlstaemap}) implies that $(\psi_{u_k},A_{u_k})$ converges to $(\psi_{u^*},A_{u^*})$ in $H^1$. The functional $\mathcal{J}(u)$ is weakly lower semicontinuous since $\phi_u^*$ is convex in $D_A\psi$ and the remaining terms are quadratic. Hence $\mathcal{J}(u^*) \le \liminf_k \mathcal{J}(u_k)$, proving optimality of $u^*$.
\end{proof}

To characterize optimality, we compute the first variation of $\mathcal{J}(u)$ with respect to $u$. Denoting by $(\psi_u,A_u)$ the associated state variables, the directional derivative along $v\in C^{2,\alpha}(M;TM)$ is
\begin{align}
\nonumber
\delta\mathcal{J}(u)[v]& = \int_M \Big( \langle D_{A_u}\psi_u, v \rangle + \partial_u F^*(x, D_{A_u}\psi_u)[v] \Big) d\mu\\
 &+ \int_M \Big( \langle \partial_u \psi_u, \nabla_{\psi}\mathcal{E}_u \rangle + \langle \partial_u A_u, \nabla_A \mathcal{E}_u \rangle \Big) d\mu.
\end{align}\label{eq:variation-J}
At an optimal control $u^*$, the stationarity condition $\delta\mathcal{J}(u^*)[v]=0$ for all admissible $v$ yields the Euler-Lagrange system for the optimal control.

\begin{theorem}\label{thm:First-order optimality system}
There exist adjoint variables $(p,q)$ satisfying
\begin{align}\label{eq:optimality-system}
\begin{cases}
L_{\psi}^*(p,q) = - D_{A_{u^*}}\psi_{u^*} - \partial_{\xi}F^*(x,D_{A_{u^*}}\psi_{u^*}),\\[0.4em]
L_A^*(p,q) = - j_{A_{u^*}},\\[0.4em]
\mathrm{div}_\mu(p) + \partial_u F^*(x,D_{A_{u^*}}\psi_{u^*}) = 0,
\end{cases}
\end{align}
where $L_{\psi}^*, L_A^*$ denote the formal adjoints of the linearized operators $L_{\psi},L_A$ from \eqref{eq:linearized-system}. The optimal control $u^*$ satisfies the variational inequality
\begin{align}\label{eq:variational-ineq}
\int_M \langle p + D_{A_{u^*}}\psi_{u^*}, v - u^* \rangle d\mu \ge 0, \qquad \forall v\in\mathcal{U}_{ad}.
\end{align}
\end{theorem}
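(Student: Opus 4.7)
The proof follows the classical Lagrangian multiplier method for PDE-constrained optimization, adapted to the control-translated Tonelli setting. The plan is to introduce the augmented Lagrangian
\[
\Lambda(u,\psi,A,p,q) = \mathcal{E}_u[\psi,A] - \int_M \langle p, E_\psi(u,\psi,A)\rangle\,d\mu - \int_M \langle q, E_A(u,\psi,A)\rangle\,d\mu,
\]
where $E_\psi = 0$ and $E_A = 0$ encode the two Euler-Lagrange equations of \eqref{eq:EL-system}. Because the state $(\psi_{u^*},A_{u^*})$ satisfies these constraints, one has $\mathcal{J}(u)=\Lambda(u,\psi_u,A_u,p,q)$ for any choice of $(p,q)$; the freedom in the multipliers is used to cancel the implicit state-derivative terms that appear through the chain rule.

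First I would differentiate $\Lambda$ with respect to the state variables $(\psi,A)$ and impose stationarity in a weak sense. Formal integration by parts converts the variations of $\mathcal{E}_u$ and of the PDE constraints into expressions involving the formal adjoints $L_\psi^*,L_A^*$ of the linearized operators from \eqref{eq:linearized-system}. The cross term $\langle D_A\psi,u\rangle$ from $\phi_u^*$ contributes the $\partial_\xi F^*$ piece to the source, while variations in $A$ produce the supercurrent source $j_{A_{u^*}}$. Setting each identity to zero recovers the first two equations of \eqref{eq:optimality-system}. Existence of $(p,q)\in C^{2,\alpha}$ then follows from the uniform ellipticity established in Theorem~\ref{thm:2}, combined with the Fredholm alternative on the gauge-orthogonal complement after imposing the Coulomb condition $d^*q=0$, together with standard Schauder estimates.

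Next I would differentiate $\Lambda$ with respect to $u$ in the direction $v$. By Theorem~\ref{thm:diff.Controlstaemap} the control-to-state map $u\mapsto(\psi_u,A_u)$ is Fr\'echet differentiable, so the chain rule gives
\[
\delta\mathcal{J}(u^*)[v] = \partial_u\Lambda[v] + \partial_\psi\Lambda[\partial_u\psi_{u^*}\cdot v] + \partial_A\Lambda[\partial_u A_{u^*}\cdot v].
\]
The last two terms vanish identically by the adjoint equations just derived, leaving only the explicit $u$-dependence. Isolating the direct contribution of $\langle D_{A_u}\psi_u,u\rangle$ and the residual obtained by integrating $\mathrm{div}_\mu(p)$ against $v$ yields simultaneously the third algebraic identity $\mathrm{div}_\mu(p)+\partial_u F^*(x,D_{A_{u^*}}\psi_{u^*})=0$ in \eqref{eq:optimality-system} and the reduced gradient expression $\delta\mathcal{J}(u^*)[v]=\int_M\langle p+D_{A_{u^*}}\psi_{u^*},v\rangle\,d\mu$.

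Finally, the variational inequality is a direct consequence of the convexity of $\mathcal{U}_{ad}$: for any $v\in\mathcal{U}_{ad}$ the segment $u^*+t(v-u^*)$ lies in $\mathcal{U}_{ad}$ for all $t\in[0,1]$, so the optimality of $u^*$ together with Fr\'echet differentiability gives $\delta\mathcal{J}(u^*)[v-u^*]\ge 0$, which is exactly \eqref{eq:variational-ineq}. The main obstacle is solvability of the coupled adjoint system: the operators $L_\psi^*$ and $L_A^*$ are only uniformly elliptic modulo gauge, and the Coulomb condition must be imposed throughout to eliminate the kernel of $d^*d$ and guarantee Fredholm solvability on the orthogonal complement. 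A secondary technical subtlety is that $u$ enters $\phi_u^*$ through fiber translation, so by Theorem~\ref{thm:1} the derivative $\partial_u\phi_u^*(x,\xi)$ is precisely the canonical pairing $\langle\xi,\cdot\rangle$; this exact affine structure must be carried consistently through every integration by parts in order for the third equation in \eqref{eq:optimality-system} to emerge with the stated form.
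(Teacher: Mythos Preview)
Your proposal is correct and follows essentially the same Lagrangian-multiplier strategy as the paper: introduce an augmented Lagrangian pairing the energy with the state equations via adjoint variables $(p,q)$, obtain the first two adjoint equations by stationarity in $(\psi,A)$, and derive the gradient condition and variational inequality from the $u$-derivative together with convexity of $\mathcal{U}_{ad}$. Your treatment is in fact more careful than the paper's, which omits the gauge-fixing and Fredholm-solvability issues you explicitly flag.
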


\begin{proof}
The Lagrangian of the coupled problem reads
\begin{align*}
\mathcal{L}(u,\psi,A,p,q) = \mathcal{E}_u[\psi,A] + \langle p, L_{\psi}(\psi,A,u) \rangle + \langle q, L_A(\psi,A,u) \rangle.
\end{align*}
Differentiating with respect to $(\psi,A)$ yields the adjoint equations \eqref{eq:optimality-system}. The derivative with respect to $u$ provides the gradient condition 
\begin{align*}
p + D_{A_{u}}\psi_{u} = 0
\end{align*}
in the unconstrained case, or the variational inequality \eqref{eq:variational-ineq} if $u\in\mathcal{U}_{ad}$. Standard optimal control arguments \cite{Evans2010} apply because $L_{\psi}$ and $L_A$ are elliptic with $C^{\alpha}$ coefficients.
\end{proof}

We next study the asymptotic behavior of optimal controls in the singular limit $\varepsilon\to 0$. Recall from Section \ref{sec:3} that the renormalized energy $\mathcal{W}_u$ governs the vortex interactions.

\begin{theorem}\label{thm:Asymptotic limit of optimal controls}
Let $u_\varepsilon^*$ denote the optimal controls minimizing $\mathcal{E}_u$ for fixed $\varepsilon>0$. Then, up to a subsequence,
\begin{align}\label{eq:limit-optimal-control}
\mathcal{E}_{u_\varepsilon^*}[\psi_{u_\varepsilon^*},A_{u_\varepsilon^*}] - \pi N|\log\varepsilon| \to \mathcal{W}_{u^*},
\end{align}
where $u^*$ minimizes the reduced functional $\mathcal{W}_u$. Moreover, $u^*$ satisfies the reduced optimality condition
\begin{align}\label{eq:reduced-ineq}
\int_M \langle \nabla_u \mathcal{W}_u, v - u^* \rangle d\mu \ge 0, \qquad \forall v\in\mathcal{U}_{ad}.
\end{align}
\end{theorem}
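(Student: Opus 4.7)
The strategy is to combine a $\Gamma$-convergence argument at the level of reduced functionals with the compactness of the admissible set $\mathcal{U}_{ad}$. First I would introduce the shifted functionals $\mathcal{F}_\varepsilon(u):=\mathcal{E}_u[\psi_u,A_u]-\pi N|\log\varepsilon|$ defined on $\mathcal{U}_{ad}$, and the candidate limit $\mathcal{F}_0(u):=\mathcal{W}_u$ given by the renormalized energy of Section \ref{sec:3}. Since $\|u_\varepsilon^*\|_{C^1}\le M$ uniformly in $\varepsilon$, the Arzel\`a-Ascoli theorem together with the compact embedding $C^{1,\alpha}\hookrightarrow C^1$ yields a subsequence (not relabelled) and a limit $u^*\in\mathcal{U}_{ad}$ with $u_\varepsilon^*\to u^*$ in $C^1(M;TM)$; the weak closedness of $\mathcal{U}_{ad}$ guarantees admissibility of $u^*$.

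The next step is the $\Gamma$-liminf inequality $\mathcal{W}_{u^*}\le\liminf_\varepsilon\mathcal{F}_\varepsilon(u_\varepsilon^*)$. Here I would invoke the vortex compactness and energy lower bound of Theorems 3.1-3.2, noting that the core contribution $\pi N|\log\varepsilon|$ is extracted from the anisotropic kinetic term uniformly in $u$ because the control enters only as an affine perturbation with coefficient bounded in $C^1$. The exterior contribution then passes to the limit using the Green representation \eqref{eq:Green-repr} together with Theorem \ref{thm:Stability under control perturbations}, which gives $G_{u_\varepsilon^*}\to G_{u^*}$ in $C^{2,\alpha}_{\mathrm{loc}}$ off the diagonal and hence continuity of the full expression \eqref{eq:renormalized-energy}. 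For the matching upper bound, I would use the constant recovery sequence $u_\varepsilon:=u^*$ for every $\varepsilon$: by the renormalized energy theorem of Section \ref{sec:3}, $\mathcal{F}_\varepsilon(u^*)\to\mathcal{W}_{u^*}$, and by optimality of $u_\varepsilon^*$ we have $\mathcal{F}_\varepsilon(u_\varepsilon^*)\le\mathcal{F}_\varepsilon(u^*)$. Combining these two inequalities yields \eqref{eq:limit-optimal-control} and simultaneously identifies $u^*$ as a minimizer of $\mathcal{W}_u$ over $\mathcal{U}_{ad}$.

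Finally, to obtain the reduced variational inequality \eqref{eq:reduced-ineq}, I would pass to the limit in the first-order condition \eqref{eq:variational-ineq} of Theorem \ref{thm:First-order optimality system}. The adjoint pair $(p_\varepsilon,q_\varepsilon)$ associated with $u_\varepsilon^*$ satisfies the dual system \eqref{eq:optimality-system}; by uniform ellipticity of $L_\psi^*, L_A^*$ (Theorem \ref{thm:2}) and Schauder estimates, $(p_\varepsilon,q_\varepsilon)$ is precompact in an appropriate H\"older space. Alternatively, and more directly, since $u\mapsto\mathcal{W}_u$ is Fr\'echet differentiable on $\mathcal{U}_{ad}$ as a consequence of \eqref{eq:self-potential} and the $C^{2,\alpha}$ dependence of $G_u$ on $u$, one can take the directional derivative of $\mathcal{W}_u$ at $u^*$ along any admissible increment $v-u^*$ and use minimality to conclude $\langle\nabla_u\mathcal{W}_{u^*},v-u^*\rangle\ge0$ pointwise, giving \eqref{eq:reduced-ineq} after integration.

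The principal technical obstacle is the simultaneous passage to the limit in $\varepsilon$ and in the control $u_\varepsilon^*$, since the state variables $(\psi_{u_\varepsilon^*},A_{u_\varepsilon^*})$ develop vortex singularities while the control is also moving. The renormalized energy theorem of Section \ref{sec:3} is stated for a fixed $u$, so one must verify that the error $o(1)$ in \eqref{eq:Green-repr} is uniform in $u$ over bounded sets of $\mathcal{U}_{ad}$; this uniformity is precisely what is supplied by Theorem \ref{thm:Stability under control perturbations}, which allows a diagonal extraction turning the pointwise $\Gamma$-convergence into the joint convergence needed here.
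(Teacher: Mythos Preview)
Your proposal is correct and follows essentially the same route as the paper: extract a $C^1$ cluster point $u^*$ of the optimal controls by compactness of $\mathcal{U}_{ad}$, invoke the $\Gamma$-convergence of Section~\ref{sec:3} (uniformly in $u$ via Theorem~\ref{thm:Stability under control perturbations}) to obtain the liminf inequality, pair it with the constant recovery sequence $u_\varepsilon\equiv u^*$ to get \eqref{eq:limit-optimal-control} and minimality of $u^*$, and then pass to the limit in the variational inequality \eqref{eq:variational-ineq} to derive \eqref{eq:reduced-ineq}. Your treatment is simply more explicit about the compactness step and the uniformity-in-$u$ issue, and you additionally offer a direct alternative for \eqref{eq:reduced-ineq} via Fr\'echet differentiability of $\mathcal{W}_u$, but the core argument coincides with the paper's.
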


\begin{proof}
By the $\Gamma$-convergence result of Section \ref{sec:3}, $\mathcal{E}_u$ converges to $\mathcal{W}_u$ uniformly on compact subsets of $C^1$ controls. Thus any cluster point $u^*$ of minimizers $u_\varepsilon^*$ minimizes $\mathcal{W}_u$. The variational inequality \eqref{eq:reduced-ineq} follows by passage to the limit in \eqref{eq:variational-ineq} using weak convergence of adjoint variables and continuity of $\mathcal{W}_u$ in $u$ (Theorem  \ref{thm:Stability under control perturbations}). The convergence of energies \eqref{eq:limit-optimal-control} results from the $\Gamma$-liminf inequality combined with the recovery sequence property of minimizers.
\end{proof}

\begin{theorem}\label{thm:Stability of optimal controls}
 Assume the second variation $\delta^2\mathcal{W}_u$ is positive definite at $u^*$. Then the optimal control is locally unique and depends smoothly on perturbations of model parameters such as $\lambda$ and the background Finsler metric $F$.
\end{theorem}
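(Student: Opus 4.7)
The plan is to deduce both local uniqueness and smooth parameter dependence from the implicit function theorem applied to the reduced first-order optimality condition from Theorem \ref{thm:Asymptotic limit of optimal controls}. I would first encode the optimality of $u^*$ as a nonlinear equation $\Phi(u,\lambda,F)=0$ in a Banach-space setting: in the interior case $\Phi(u,\lambda,F)=\nabla_u\mathcal{W}_u$, and in the constrained case I would pass to a local parametrization of the active face of $\mathcal{U}_{ad}$ near $u^*$ so that the same gradient equation holds on the tangent subspace. The target space for $\Phi$ is the dual of the tangent directions in $C^{2,\alpha}(M;TM)$, matched to the symmetric bilinear form $\delta^2\mathcal{W}_u$.

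Next I would verify the hypotheses of the implicit function theorem. Continuous Fréchet differentiability of $\Phi$ in $u$ at $(u^*,\lambda_0,F_0)$ follows from the representation of $\mathcal{W}_u$ via the Green kernel $G_u$ in \eqref{eq:renormalized-energy}--\eqref{eq:self-potential}, whose Schauder-type differentiability with respect to the coefficients of $\Delta_u$ is supplied by Theorem \ref{thm:Stability under control perturbations}; the joint $C^1$ dependence on $(\lambda,F)$ enters through the same Schauder estimates applied to the parameter-dependent operator $\Delta_{u,\lambda,F}$ and the quadratic dependence of $\mathcal{E}_u$ on $\lambda^{-1}|dA|^2$. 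The crucial invertibility of $\partial_u\Phi(u^*,\lambda_0,F_0)=\delta^2\mathcal{W}_{u^*}$ is provided by the positive-definiteness assumption, which yields a uniform coercivity bound and hence a bounded inverse on the admissible tangent subspace.

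With these ingredients the implicit function theorem produces a unique $C^1$ map $(\lambda,F)\mapsto u^*(\lambda,F)$ in a neighborhood of $(\lambda_0,F_0)$, establishing both local uniqueness of the critical point and smooth parameter dependence. Local uniqueness as a \emph{minimizer} (not merely as a critical point) is then upgraded by a short convexity argument: strict positivity of $\delta^2\mathcal{W}_{u^*}$, combined with the $C^2$ regularity of $\mathcal{W}_u$ in $u$ (inherited from the joint Schauder regularity of $G_u$), yields a neighborhood of $u^*$ on which $\mathcal{W}_u$ is strictly convex, precluding any other local minimizer. Higher smoothness $u^*\in C^k$ in $(\lambda,F)$ would follow by iterating the implicit function argument, provided the data $(F,u)$ are $C^k$.

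The main obstacle I expect is the functional-analytic compatibility between the abstract positive-definiteness hypothesis and the topology of admissible controls. Positivity of the quadratic form $\delta^2\mathcal{W}_{u^*}$ on a dense subspace does not automatically yield invertibility as a map from $C^{2,\alpha}$ into its dual in a form usable by the implicit function theorem, so one must pair $\delta^2\mathcal{W}_{u^*}$ with an appropriate Gelfand triple (for instance $C^{2,\alpha}\hookrightarrow L^2\hookrightarrow (C^{2,\alpha})'$) or work in an intermediate Sobolev setting where coercivity translates directly into ellipticity of the Hessian operator. Handling the boundary case $\|u^*\|_{C^1}=M$ requires additional care, but is accommodated by restricting the implicit function argument to the tangent cone of $\mathcal{U}_{ad}$ at $u^*$ and using the variational inequality \eqref{eq:reduced-ineq} in place of the gradient equation, with strict complementarity ensuring that the active constraints persist under small perturbations of $\lambda$ and $F$.
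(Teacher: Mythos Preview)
Your proposal is correct and follows essentially the same approach as the paper: local uniqueness is obtained from strict convexity of $\mathcal{W}_u$ near $u^*$ (a consequence of positive definiteness of $\delta^2\mathcal{W}_{u^*}$), and smooth dependence on $(\lambda,F)$ is obtained via the implicit function theorem in Banach spaces applied to the first-order optimality condition, using invertibility of $D_u^2\mathcal{W}_u$ at $u^*$. Your version is considerably more detailed---addressing the Gelfand-triple issue for coercivity versus invertibility and the constrained case $\|u^*\|_{C^1}=M$---but the paper's proof is simply a terse statement of the same strategy without working through these technicalities.
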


\begin{proof}
Positivity of the second variation implies strict convexity of the reduced functional $\mathcal{W}_u$ in a neighborhood of $u^*$. Let $\mathcal{P}(F,\lambda)$ denote the parameter-to-functional map. The implicit function theorem in Banach spaces \cite{Evans2010} ensures smooth dependence of $u^*$ on $(F,\lambda)$ provided that $D_u^2\mathcal{W}_u$ is invertible at $u^*$. Local uniqueness follows immediately from strict convexity.
\end{proof}

\section*{Conclusion}

This work establishes a rigorous analytical and geometric framework for Ginzburg-Landau systems within a control-translated Tonelli-Finsler setting. By embedding a smooth control field $u(x)$ as a translation in the tangent bundle, we introduced a deformed Tonelli Lagrangian that preserves Legendre duality and generates a uniformly elliptic operator $\Delta_u$. The corresponding energy functional $E_u$ remains convex, coercive, and geometrically well defined.

A detailed $\Gamma$-convergence analysis yields the renormalized energy $W_u$ governing vortex interactions under control translation, clarifying how the control field modifies both the Green kernel and the self-interaction potential. The results demonstrate that control acts intrinsically through geometric deformation of the underlying metric, rather than as an external force, thereby influencing both the microscopic vortex configuration and the macroscopic dynamics of the system.

The framework unifies Finsler geometry, convex analysis, and variational control into a coherent theoretical structure. It extends previous Finsler-Ginzburg-Landau models by providing a precise geometric mechanism for feedback, stabilization, and manipulation of vortices in anisotropic superconducting media. Future research directions include time-dependent Tonelli flows, optimal feedback control, and computational realizations in engineered anisotropic superconductors.

\section*{Data Availability}
No datasets were generated or analyzed during the current study.


\end{document}